\newcommand{\R}{\mathbb{R}}
\newcommand{\C}{\mathbb{C}}
\newcommand{\G}{{\cal G}}
\newcommand{\K}{{\cal K}}
\newcommand{\F}{{\cal F}}
\newcommand{\pr}{{\rm pr}}
\newcommand{\lr}{\lrcorner\,}
\newcommand{\B}{{\cal B}}
\renewcommand{\S}{{\cal S}}
\newcommand{\h}{{\cal H}}
\newcommand{\eps}{\epsilon}
\newcommand{\supp}{{\rm supp}}
\newcommand{\Cyl}{{\rm Cyl}}
\newcommand{\bld}[1]{\boldsymbol{#1}}
\newcommand{\tl}[1]{\tilde{#1}}
\newcommand{\tll}{\tilde{l}}
\newcommand{\we}{\wedge}
\newcommand{\la}{\lambda}
\newcommand{\La}{\Lambda}
\newcounter{mnotecount}[section]
\newtheorem{thr}{Theorem}
\newtheorem{lm}[thr]{Lemma}
\newtheorem{pro}[thr]{Proposition}
\numberwithin{equation}{section}
\numberwithin{thr}{section}
\begin{document}

\title{Constrained projective quantum states for the degenerate Pleba\'nski gravity}

\author{Andrzej Oko{\l}\'ow}

%\date{February 10, 2017}
\date{March 21, 2017}

\maketitle
\begin{center}
{\it  Institute of Theoretical Physics, University of Warsaw\\ ul. Pasteura 5, 02-093 Warszawa, Poland\smallskip \smallskip\\
oko@fuw.edu.pl}
\end{center}
\medskip

\begin{abstract}
Nowadays projective quantum states can be constructed for a number of field theories including Loop Quantum Gravity. However, these states are kinematic in this sense that their construction does not take into account the dynamics of the theories. In particular, the construction neglects constraints on phase spaces. Here we present projective quantum states for a ``toy-theory'' called degenerate Pleba\'nski gravity which satisfy a constraint of this theory.      
\end{abstract}

%***************************************************
\section{Introduction \label{intro}}
%***************************************************

In the late 70's of the previous century J. Kijowski \cite{kpt} introduced a method of constructing quantum states for field theories which was based on canonical quantization of local degrees of freedom and thereby was independent of the global structure of a spacetime. 

A classical field theory usually describes a physical system of an infinite number of (local) d.o.f.. Kijowski's method consists, roughly speaking, in doing three major steps:
\begin{enumerate}
\item first one isolates from the infinite system described by a field theory a set $\Lambda$ of finite physical systems (i.e. each such a system describes a finite number of d.o.f.);
\item next one ``quantizes'' each finite system $\lambda$ separately obtaining a space $\mathfrak{S}_{\lambda}$ of quantum states of the system;
\item finally one builds a space $\mathfrak{S}$ of quantum states for the theory from the family $\{\mathfrak{S}_\lambda\}_{\lambda\in\Lambda}$.
\end{enumerate}
The set $\La$ of finite systems chosen in the first step of the construction is required to be a directed set such that the directing relation $\geq$ is the relation {\em system-subsystem}, i.e. $\la'\geq\la$ if the system $\la$ is a subsystem of the system $\la'$. In the second step one assigns to the system $\la$ a Hilbert space $\h_\la$ and defines the space $\mathfrak{S}_\la$ of quantum states as the set of all states on a $C^*$-algebra $\B_\la$ of all bounded operators on $\h_\la$ \cite{mod-proj}. If the first step is  done properly then the family $\{\mathfrak{S}_\la\}_{\la\in\La}$ can be equipped with the structure of a {\em projective family} and the space $\mathfrak{S}$ is defined as the {\em projective limit} of the family. Therefore elements of $\mathfrak{S}$ are called {\em projective quantum states}.                 

The original projective method was applied by Kijowski to a theory of a scalar field \cite{kpt}. In recent years the method underwent gradual development related  generally speaking to attempts to quantize general relativity. This development allowed to apply the method to a background independent ``toy-theory'' called {\em degenerate Pleba\'nski gravity} (DPG) \cite{q-nonl}, to the Teleparallel Equivalent of General Relativity \cite{q-stat} and finally to Loop Quantum Gravity \cite{proj-lt-I,proj-lt-II,proj-lqg-I}. 

However, these constructions ignore dynamics of the theories---in all the cases listed above projective quantum states are built on the basis of phase spaces of the theories without taking into account constraints on the spaces and Hamiltonians. Therefore the states can be called {\em kinematic} or {\em unconstrained}. Needless to say that to be really useful in quantization of field theories  the projective method has to be supplemented by $(i)$ a procedure yielding {\em constrained} quantum states i.e. states which (in a sense) are solutions of constraints and are invariant with respect to gauge transformations generated by first class constraints and by $(ii)$ a procedure for implementing Hamiltonians. 

By now there are no detailed and complete procedures of these sorts which would be general enough to be applicable to a broad range of theories. There are only two partial ones: $(i)$ in \cite{kpt} there is an outline of a prescription for choosing a vacuum state from projective quantum states (this prescription refers to a Hamiltonian of a theory), and $(ii)$ in \cite{proj-lt-I} a rather general procedure was introduced aimed at transforming unconstrained finite physical systems of a theory into constrained ones i.e. into finite systems constructed from gauge invariant d.o.f. which satisfy constraints of the theory. It is worthy to note that the latter procedure was successfully applied in \cite{proj-lt-III} to a toy-model.

The goal of the present paper is to construct constrained projective quantum states for DPG and to check whether we can learn from this example anything which may be helpful in establishing a general procedure for constructing constrained projective quantum states. 

To be more precise: there are two constraints on the phase space of DPG. We will neglect one of them and will construct projective quantum states for the theory on the basis of classical solutions of the other constraint quotiented by corresponding gauge transformations. These constrained states will be built independently of the unconstrained ones constructed for DPG in \cite{q-nonl}. Comparing these two constructions we will be able to find a simple relation between unconstrained finite physical system and constrained ones. We will show then that this simple relation is not included in the procedure introduced in \cite{proj-lt-I} which means that this procedure should be suitably extended.   

Let us note finally that in \cite{non-comp} we constructed projective quantum states for DPG which solve both constraints of the theory, but that construction differs significantly from those presented in the papers listed above and rather cannot be generalized to be applicable to other theories.

The present paper is organized as follows: in Section \ref{prel} we describe briefly DPG and a detailed method for constructing unconstrained projective quantum states. The next section contains the construction of the constrained quantum states for the theory. In Section \ref{disc} we analyze the relation between the unconstrained and constrained finite systems and compare this relation to that assumed  by the procedure described in \cite{proj-lt-I}. Some proofs which were too technical to be included in the main part of the paper are placed at the end of it in Appendix.

%***************************************************
\section{Preliminaries \label{prel}}
%***************************************************

%***************************************************
\subsection{Degenerate Pleba\'nski gravity}
%***************************************************

Let ${\cal M}$ be a four-dimensional manifold, and let ${\cal G}=(\R,+)$ be the Lie group of the real numbers with the addition as the group action. Every connection $\mathbf{A}$ on a trivial principal bundle ${\cal M}\times {\cal G}$ can be naturally treated as a real-valued one-form on $\cal M$. Let $\bld{\sigma}$ and $\bld{\Theta}$ be, respectively, a two-form and a function on $\cal M$. The {\em degenerate Pleba\'nski gravity} (DPG) is a background independent theory of the fields $\bld{\sigma}$, $\mathbf{A}$ and $\bld{\Theta}$ subjected to Euler-Lagrange equations given by the following action:
\begin{equation}
\S[\bld{\sigma},\mathbf{A},\bld{\Theta}]:=\int_{\cal M} \Big(\bld{\sigma}\wedge d\mathbf{A} -\frac{1}{2}\bld{\Theta}\, \bld{\sigma}\wedge\bld{\sigma}\Big).
\label{action}
\end{equation} 
This theory was introduced in  \cite{non-comp} as a simplification of Pleba\'nski's self-dual formulation \cite{pleb} of general relativity and it describes a degenerate sector \cite{jac} of the latter theory.

To obtain a Hamiltonian formulation of DPG \cite{non-comp} it is convenient to suppose that there exists a three-dimensional compact connected manifold $\Sigma$ without boundary such that the product $\Sigma\times \R$ is diffeomorphic to ${\cal M}$---then one can regard the product as a $3+1$ decomposition of $\cal M$, where $\Sigma$ plays a role of a ``space'' and $\R$ plays a role of a 'time'. Canonical variables of DPG are a one-form\footnote{Here we use a version of canonical formalism in which a variable is not a component of a differential form but the entire form \cite{ham-diff}.} $A$ on $\Sigma$ as a configurational variable ($A$ can be thought of as a connection on a trivial principle bundle $\Sigma\times{\cal G}$) and a two-form $\sigma$ as a momentum conjugate to $A$. There are two first class constraints on the phase space: for every vector field $\vec{N}$ and for every function $\alpha$ on $\Sigma$   
\begin{align*}
C[\sigma,A;\vec{N}]&:=\int_{\Sigma}\sigma\we(\vec{N}\lr dA)=0, & C_{\rm G}[\sigma,A;\alpha]&:=\int_{\Sigma}\alpha \,d\sigma=0.
\end{align*}
The Hamiltonian of DPG is a sum of the two constraints
\[
H[\sigma,A;\alpha,\vec{N}]=C[\sigma,A;\vec{N}]+C_{\rm G}[\sigma,A;\alpha],
\]
where $\alpha,\vec{N}$ are Lagrange multipliers. 

There is no clear interpretation of gauge transformations generated by the constraint $C$, but if 
\[
C_{\rm D}[\sigma,A;\vec{N}]:=C[\sigma,A;\vec{N}]-C_{\rm G}[\sigma,A;\vec{N}\lr A]
\]
then orbits of gauge transformations generated by the constraint $C_{\rm D}$ are given by the following differential equations: 
\begin{align*}
\frac{d\sigma}{d\lambda}&=\{\sigma,C_{\rm D}[\sigma,A;\alpha]\}={\cal L}_{\vec{N}}\sigma,&\frac{dA}{d\lambda}&=\{A,C_{\rm D}[\sigma,A;\alpha]\}={\cal L}_{\vec{N}}A,
\end{align*}
where ${\cal L}_{\vec{N}}$ denotes the Lie derivative of a tensor field with respect to the vector field $\vec{N}$. Thus the gauge transformations defined by $C_{\rm D}$ are given by pull-backs of $\sigma$ and $A$ along integral curves of $\vec{N}$. Therefore $C_{\rm D}$ can be called a diffeomorphism constraint.     

In the case of the constraint $C_{\rm G}$ gauge transformations are solutions of the following equations: 
\begin{align*}
\frac{d\sigma}{d\lambda}&=\{\sigma,C_{\rm G}[\sigma,A;\alpha]\}=0,&\frac{dA}{d\lambda}&=\{A,C_{\rm G}[\sigma,A;\alpha]\}=-d\alpha.
\end{align*}
This means that the momentum $\sigma$ is preserved by the transformations, while the one-form $A$ is transformed by adding to it the differential of a function $f$ on $\Sigma$:
\[
A\mapsto A'=A+df.
\]  
Thus $C_{\rm G}$ is a {\em Gauss constraint}.

Below we will neglect the diffeomorphism constraint and will construct constrained projective quantum states for DPG using solutions of the Gauss constraint and gauge transformations defined by it. Since this theory is background independent the unconstrained states described in \cite{q-nonl} were constructed in a background independent way. The constrained states will be built in the same manner.

%***************************************************
\subsection{Projective quantum states---outline of the construction \label{outline}}
%***************************************************

The idea of the construction of projective quantum states for field theories, as described in Section \ref{intro}, is rather simple, but in practice the first step of the construction, that is, a proper choice of a set $\La$ of finite physical systems may be very difficult. In this paper to choose properly a set $\La$ which will give us constrained states for DPG  we will apply a prescription presented in \cite{q-nonl}. This prescription yields unconstrained states and we will have to modify it to obtain the desired constrained ones.

The prescription  assumes that a phase space of a field theory is a product $P\times Q$ of a space $P$ of momenta and a configuration space $Q$ and requires to choose $(i)$ a set $\cal K$ of real functions on $Q$ which separates points in $Q$ and $(ii)$ a set $\cal F$ of real functions on $P$ which separates points in $P$. Elements of $\cal K$ ($\cal F$) are called {\em configurational (momentum) elementary degrees of freedom}.          
 
Consider a finite set $K=\{\kappa_1,\ldots,\kappa_N\}$ of configurational elementary d.o.f. and denote by $[q]$ an intersection of level sets of the  functions $\{\kappa_1,\ldots,\kappa_N\}$ which contains $q\in Q$. Let
\[
Q_K:=\{\ [q] \ | \ q\in Q\ \}.
\]
The functions $\{\kappa_1,\ldots,\kappa_N\}$ define a natural map\footnote{In fact, there are many maps of this sort which differ from each other by the ordering of the numbers $\big(x_1([q]),\ldots,x_N([q])\big)$. Since the construction of projective quantum states does not prefer any particular ordering we will regard every such a map as natural.}
\begin{equation}
\begin{aligned}
  \tilde{K}:&\, Q_K\to \R^N,\\
  &[q]\mapsto \big(x_1([q]),\ldots,x_N([q])\big):=\big(\kappa_1(q),\ldots,\kappa_N(q)\big),
\end{aligned}  
\label{K-fr}
\end{equation}
which will be treated as a (global) coordinate frame on $Q_K$.

The d.o.f. $\{\kappa_1,\ldots,\kappa_N\}$ are said to be {\em independent} if the image of the map \eqref{K-fr} is an open subset of $\R^N$. Then $Q_K$ can be treated as a smooth manifold called  {\em reduced configuration space}.      

Let $K$ be a set of independent d.o.f.. A {\em cylindrical function compatible with} $K$ is a complex function $\Psi$ on $Q$ such that 
\begin{equation}
\Psi=\pr_K^*\psi,
\label{cyl-f}
\end{equation}
where $\pr_K$ is a projection 
\[
Q\in q\mapsto \pr_K(q):=[q]\in Q_K
\]  
and $\psi$ is a smooth complex function on $Q_K$. A complex linear space spanned by all  cylindrical functions (given by all sets $\{K\}$ of independent d.o.f.) will be denoted by $\Cyl$. 

Assume that every momentum d.o.f. $\varphi\in{\cal F}$ and the Poisson bracket  on $P\times Q$ define a linear operator $\hat{\varphi}$ on $\Cyl$:
\begin{equation}
\Cyl\ni\Psi\mapsto\hat{\varphi}\Psi:=\{\varphi,\Psi\}\in\Cyl
\label{mom-op}
\end{equation}
which will be called a {\em momentum operator}. In some cases including the one considered in this paper it is necessary to admit operators $\hat{\varphi}$ defined on $\Cyl$ via a regularization of $\{\varphi,\Psi\}$. Denote by $\hat{\cal F}$ a real linear space spanned by all momentum operators $\{\hat{\varphi}\}$.   

Let $\mathbf{K}$ be the set of all sets of independent configurational elementary d.o.f. and let $\hat{\mathbf{F}}$ be the set of all finite dimensional linear subspaces of $\hat{\F}$. Suppose that $\Lambda$ is a subset of a product $\hat{\mathbf{F}}\times \mathbf{K}$. Then an element $\la$ of $\La$ is a pair $(\hat{F},K)$, where the finite dimensional space $\hat{F}$ represents a finite number of momentum elementary d.o.f. and the set $K$ represents a finite number of configurational elementary d.o.f.. Thus $\la$ can be regarded as a finite physical system.

Suppose moreover that $\La$ is equipped with a directing relation $\geq$ such that the directed set $(\La,\geq)$ satisfies the following {\em Assumptions}:
\begin{enumerate}
\item 
\begin{enumerate}
\item for every finite set $K_0$ of configurational elementary d.o.f. there exists $(\hat{F},K)$ $\in\Lambda$ such that every $\kappa\in K_0$ is a cylindrical function compatible with $K$; \label{k-Lambda}
\item for every finite set $F_0$ of momentum elementary d.o.f. there exists $(\hat{F},K)\in\Lambda$ such that for every $\varphi\in F_0$ the corresponding operator $\hat{\varphi}\in\hat{F}$; \label{f-Lambda}
\end{enumerate}
\item \label{RN} 
if $(\hat{F},K)\in\Lambda$ then the image of the map \eqref{K-fr} is $\R^N$, where $N$ is the number of elements of $K$;
\item 
if $(\hat{F},K)\in\Lambda$, then 
\begin{enumerate}
\item for each $\hat{\varphi}\in \hat{\F}$ and for each cylindrical function $\Psi=\pr_K^*\psi$ compatible with $K=\{\kappa_1,\ldots,\kappa_N\}$ 
\[
\hat{\varphi}\Psi=\sum_{I=1}^N\Big(\pr^*_K\partial_{x_I}\psi\Big)\hat{\varphi}\kappa_I,
\]   
where $\{\partial_{x_i}\}$ are vector fields on $Q_K$ defined by the coordinate frame \eqref{K-fr}; \label{comp-f} 
\item for each $\hat{\varphi}\in \hat{\F}$ and for each $\kappa\in K$ the cylindrical function $\hat{\varphi}\kappa$ is a real {\em constant} function on $Q$; \label{const}
\end{enumerate}
\item if $(\hat{F},K)\in\Lambda$ and $K=\{\kappa_{1},\ldots,\kappa_{N}\}$ then $\dim\hat{F}=N$ and if $(\hat{\varphi}_1,\ldots,\hat{\varphi}_N)$ is a basis of $\hat{F}$ then an $N\times N$ matrix $G=(G_{JI})$ of components
\begin{equation}
G_{JI}:=\hat{\varphi}_J\kappa_I
\label{Gij}
\end{equation}
is {\em non-degenerate}; \label{non-deg}
\item  if $(\hat{F},K'),(\hat{F},K)\in\Lambda$ and $Q_{K'}=Q_{K}$ then  $(\hat{F},K')\geq(\hat{F},K)$; \label{Q'=Q} 
\item if $(\hat{F}',K')\geq(\hat{F},K)$ then 
\begin{enumerate}
\item every d.o.f. in $K$ is {\em a linear combination} of d.o.f. in $K'$; \label{lin-comb}
\item every operator in $\hat{F}$ is {\em a linear combination} of operators in $\hat{F}'$. \label{FF'}
\end{enumerate} 
\end{enumerate} 
If these assumptions are satisfied and if for every $\la=(\hat{F},K)\in\La$ 
\[
\h_\la:=L^2({Q}_{K},d\mu_\la),
\]
where $d\mu_\la$ is a Lebesgue measure on ${Q}_{K}$ defined by the natural coordinate frame \eqref{K-fr} then the resulting family $\{\mathfrak{S}_\la\}_{\la\in\La}$  is naturally a projective family and its projective limit $\mathfrak{S}$ is ``large enough'' to serve as a space of quantum states for the field theory \cite{mod-proj}.  

%***************************************************
\section{Constrained projective quantum states for DPG}
%***************************************************

Recall that the phase space of DPG is a product $P\times Q$, where the space $P$ of momenta is the set of all two-forms on $\Sigma$, and the configuration space $Q$ is the set of all one-forms on the manifold. A point $(\sigma,A)$ of the phase space satisfies the Gauss constraint $C_{\rm G}$ if $d\sigma=0$. Consequently, if $P_{\rm G}$ is the set of all closed two-forms on $\Sigma$ then $P_{\rm G}\times Q$ is the set of all (classical) solutions of the constraint. Let us now define on $Q$ an equivalence relation: $A,A'\in Q$ are said to be equivalent, $A\sim A'$, if there exists a function $f$ on $\Sigma$ such that $A'=A+df$. Denote by $\tilde{Q}$ a quotient space $Q/\!\!\sim$. Then the product $P_{\rm G}\times \tilde{Q}$ is the set of all orbits of the gauge transformations generated by the Gauss constraint contained in  $P_{\rm G}\times Q$ (recall that the transformations preserve each momentum $\sigma\in P_{\rm G}$). In other words $P_{\rm G}\times \tl{Q}$ is the set of all (classical) solutions of the constraint modulo the gauge transformations. 

To construct the desired constrained projective  quantum states for DPG  we will apply a modification of the prescription described in the previous subsection---the modification will be simple and natural and will consist in choosing the space $P_{\rm G}\times\tilde{Q}$ as the point of departure for the construction instead of the phase space $P\times Q$. To put it differently, we solved the Gauss constraint and will ``quantize'' the orbits of the gauge transformations passing through the solutions obtaining thereby the quantum states. Let us emphasize that in the sequel the letter $Q$ appearing in symbols used in the prescription will be suitably replaced by $\tilde{Q}$.

According to the prescription the first step of the construction of projective quantum states is a choice of elementary d.o.f.. Because we would like our construction to be background independent we will apply some ideas used in Loop Quantum Gravity (LQG) (see e.g. \cite{cq-diff,rev,rev-1} and references therein) to define the d.o.f.: since our configurational variable is a connection one-form $A$ a configuration  elementary d.o.f. will be defined as a holonomy of the connection along a loop in $\Sigma$, similarly, since our momentum variable is a two-form a momentum elementary d.o.f. will be defined as an integral of it over a surface in $\Sigma$. 

To allow the LQG ideas to work we suppose  that the manifold $\Sigma$ is {\em real analytic} and {\em oriented}. 

%***************************************************
\subsection{Configurational elementary d.o.f.}
%***************************************************

%***************************************************
\subsubsection{Loops} 
%***************************************************

An {\em analytic edge} is a one-dimensional connected analytic embedded submanifold of $\Sigma$ with a two-point boundary i.e. with a boundary which consists of two {\em distinct} points. An {\em oriented} one-dimensional connected $C^0$ submanifold of $\Sigma$ with two-point boundary given by a finite union of analytic edges will be called an {\em edge}. Due to the orientation we can distinguish two points constituting the boundary of an edge: one of them can be called a {\em source} and the other a {\em target} of the edge. An edge $e^{-1}$ is an {\em inverse} of an edge $e$ if $e^{-1}$ and $e$ coincide as unoriented submanifolds of $\Sigma$ and possess opposite orientations.  

Let $(e_1,e_2,\ldots,e_{N-1},e_N)$ be an ordered set of edges such that the target of the edge $e_{I}$ coincides with the source of the edge $e_{I+1}$. The set will be called a {\em composition} of the edges and denoted by 
\[
e_N\circ e_{N-1}\circ \ldots\circ e_{2}\circ e_1.
\]
Note that a composition of edges as defined here is never an edge but in some cases it can be identified with an edge. Suppose that $(y_1,\ldots,y_N)$ is a sequence of pairwise distinct points of the interior of an edge $e$ ordered according to its orientation. The points divide the edge into subedges $\{e_1,\ldots,e_N,e_{N+1}\}$ which can be composed as $e_{N+1}\circ e_N\circ\ldots\circ e_1$. Obviously, in this case it is natural to identify the composition with the original edge $e$.      

A {\em (piecewise analytic) loop} $l$ based at $y\in\Sigma$ is a composition $e_N\circ\ldots\circ e_1$ of edges such that both the source of $e_1$ and the target of $e_N$ coincide with $y$. The {\em inverse} $l^{-1}$ of the loop $l$ is a loop based at the same point as $l$ given by a composition  $e^{-1}_1\circ\ldots\circ e^{-1}_N$. If loops $l=e_N\circ\ldots\circ e_1$ and $l'=e'_N\circ\ldots\circ e'_1$ are based at the same point then a composition $l'\circ l$ is a loop defined as 
\[
e'_N\circ\ldots\circ e'_1\circ e_N\circ\ldots\circ e_1.
\]
The set of all loops in $\Sigma$ will be denoted by $\mathbb{L}$.

%***************************************************
\subsubsection{Definition of configurational elementary d.o.f.}
%***************************************************

An integral of a one-form $A$ along a composition  $e_N\circ\ldots\circ e_1$ of edges can be defined naturally as follows:
\[
\int_{e_N\circ\ldots\circ e_1} A:=\sum_{I=1}^N \int_{e_I}A.
\]
Denote by $[A]_{\rm G}$ an orbit of the gauge transformations generated by the Gauss constraint $C_{\rm G}$ which contains the one-form $A$. Let $l$ be a loop. Then 
\[
\tilde{Q}\ni[A]_{\rm G}\mapsto \kappa_l([A]_{\rm G}):=\int_l A \in\R
\]  
is a well defined map---if $A'\in[A]_{\rm G}$ then $A'=A+df$ and
\[
\int_lA'=\int_lA+\int_ldf=\int_lA.
\]  
To keep the notation as simple as possible we will write $\kappa_l(A)$ instead of $\kappa_l([A]_{\rm G})$. As shown in \cite{oko-H} $\kappa_l(A)$ is nothing else but a holonomy of the connection $A$ along the loop $l^{-1}$.   

We choose a set
\begin{equation}
\{\ \kappa_l \neq 0\ | \ l\in\mathbb{L}  \ \}
\label{cedof}
\end{equation}
to be a set $\K$ of configurational elementary d.o.f..

A loop $l$ will be called {\em trivial} if $\kappa_l=0$.

Suppose that the set $\K$ does not separate points in $\tilde{Q}$ i.e. that there exists one-forms $A,A'\in Q$ such that $[A]_{\rm G}\neq [A']_{\rm G}$ and for every $l\in\mathbb{L}$ $\kappa_l(A)=\kappa_l(A')$. But then $\kappa_{l}(A-A')=0$ for each loop $l$. On the other hand  if an integral of a one-form along every (piecewise analytic) loop is zero then the one-form is exact. Thus $A-A'$ is exact which means that $[A]_{\rm G}= [A']_{\rm G}$ contrary to the assumption above. This allows us to conclude that the set $\K$ does separate points in $\tilde{Q}$.

%***************************************************
\subsubsection{Finite sets of configurational d.o.f.}
%***************************************************

Here we are going to distinguish a sort of finite sets of configurational d.o.f. which will be used then to define a directed set of finite physical systems.

To this end we will change a way of labeling the configurational d.o.f.. A labeling we are going to use was introduced in \cite{al-hoop}.  It is based on an observation that many distinct loops define the same d.o.f.. For example, if $l=e_2\circ e_1$ and $l'=e_2\circ e^{-1}\circ e\circ e_1$ then obviously $\kappa_l=\kappa_{l'}$. Therefore it is reasonable to identify those loops which define the same d.o.f.: we say that a loop $l$ is in relation with a loop $l'$, $l\sim l'$, if for every $A\in Q$     
\[
\kappa_l(A)=\kappa_{l'}(A).
\]
The relation $\sim$ is an equivalence one. An equivalence class of a loop $l$ will be denoted by $\tilde{l}$ and called a {\em hoop}. Since now we will also use a symbol $\kappa_{\tilde{l}}\equiv \kappa_l$. The set of all hoops will be denoted by $\h\G$.

Let $y$ be an arbitrary point of $\Sigma$. Then each hoop $\tilde{l}$ contains a loop based at $y$. Indeed, let the loop $l$ be based at $y'$. Since $\Sigma$ is assumed to be connected there exists an edge $e$ which starts at $y$ and ends at $y'$. Then $l':=e^{-1}\circ l\circ e$ is based at $y$ and belongs to $\tilde{l}$. 

This fact allows to introduce a composition of hoops $\tilde{l}$ and $\tilde{l}'$ \cite{al-hoop}. Let loops $l_y$ and $l'_y$ be based at $y$ and be representatives of the hoops. Then 
\[
\tilde{l}\circ\tilde{l}':=\widetilde{l_y\circ l'_y}.
\]                 
One can easily show that this composition does not depend on the choice of $y$ and the representatives of the hoops. Moreover, it defines on $\h\G$ a structure of an {\em Abelian group} (the neutral element of the group is a hoop given by a trivial loop, the inverse $\tilde{l}^{-1}$ of $\tilde{l}$ is equal to $\widetilde{l^{-1}}$).  

Consider a set $\{l_1,\ldots,l_N\}$ of non-trivial loops such that
\begin{equation}
l_I=e_{In_I}\circ\ldots \circ e_{I1}. 
\label{li-ee}
\end{equation}
We say that the loops are {\em independent} if \cite{al-hoop}
\begin{enumerate}
\item for each $I$ and for each $i<n_I$ there exists an open neighborhood $U$ of the target of $e_{Ii}$ such that $e_{Ii}\cap e_{I(i+1)}\cap U$ coincides with the target\footnote{This condition excludes a case $e_{I(i+1)}=e^{-1}_{Ii}$ and similar ones.};
\item each loop $l_I$ contains an edge $e_{Im_I}$ (being a factor in the composition \eqref{li-ee}) such that 
\begin{enumerate}
\item for every $i\neq m_I$ an intersection $e_{Im_I}\cap e_{Ii}$  is a finite or the empty set;  
\item for every $J\neq I$ an intersection\footnote{An intersection of a set $V\subset \Sigma$ and a loop $l=e_N\circ\ldots \circ e_1$ is naturally defined as $V\cap (e_N\cup\ldots \cup e_1)$. An intersection of the loop $l$ and a loop $l'=e'_{N'}\circ\ldots \circ e'_1$ is a set $(e_N\cup\ldots \cup e_1)\cap (e'_{N'}\cup\ldots \cup e'_1)$.} $e_{Im_I}\cap l_J$ is a finite or the empty set. 
\end{enumerate}
\end{enumerate}
Non-trivial hoops $\{\tilde{l}_1,\ldots,\tilde{l}_N\}$ are {\em independent} if there exists a representative $l'_I$ of each $\tilde{l}_I$ such that the loops $\{l'_1,\ldots,l'_N\}$ are independent.   

\begin{lm}{\rm \cite{al-hoop}}
For every finite set $\{\tilde{l}_1,\ldots,\tilde{l}_{N}\}$ of hoops there exists a set $\tilde{L}'$ of independent hoops such that each hoop $\tilde{l}_I$ is a composition of hoops in $\tilde{L}'$ and their inverses. \label{h-h}
\end{lm}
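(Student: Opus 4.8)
The plan is to reduce the statement to a combinatorial fact about cycles in a graph. Let $l_1,\dots,l_N$ be loops representing the given hoops, each written as a composition of analytic edges, and collect all the analytic edges that occur. By analyticity the intersection of any two of these edges is a finite union of points and of common analytic subedges; subdividing every edge at all such points, at the endpoints of all such common subedges, and at the base points of the $l_I$, one obtains a finite set $\gamma=\{e_1,\dots,e_M\}$ of analytic edges, any two of which are either disjoint or meet only in shared endpoints --- that is, $\gamma$ is a graph --- and such that every $l_I$ is, as an oriented path, a composition of edges of $\gamma$ and their inverses, i.e.\ a closed edge-path in $\gamma$.

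Next I would choose a spanning tree in each connected component of $\gamma$; let $T$ be the union of these trees, and fix a base vertex in each component. For an edge $e\in\gamma\setminus T$ let $p_e$ (respectively $q_e$) be the unique reduced edge-path inside $T$ from the base vertex to the source of $e$ (respectively from the target of $e$ back to the base vertex), and put $l_e:=q_e\circ e\circ p_e$; since $p_e$ and $q_e$ are reduced tree-paths and $e\notin T$, the loop $l_e$ is reduced, i.e.\ no two of its consecutive edges are mutually inverse. Set $\tilde L':=\{\,\tilde l_e \mid e\in\gamma\setminus T\,\}$. Because $\kappa_l(A)=\int_l A$ depends only on the net signed number of times $l$ traverses each edge of $\gamma$ --- and because for each such edge there is a one-form detecting that edge alone --- the hoops of closed edge-paths in $\gamma$ form a subgroup of $\h\G$ canonically isomorphic to the first homology group $H_1(\gamma;\Z)$, with standard basis the classes $\tilde l_e$; hence each $\tilde l_I$ is a composition of the $\tilde l_e$ and their inverses, the coefficients being read off from the non-tree edges in a reduced representative of $l_I$. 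Moreover $e$ is traversed exactly once in $l_e$ and its interior is disjoint from the remaining edges, so $\kappa_{l_e}\ne 0$ and the hoop $\tilde l_e$ is non-trivial.

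It then remains to verify that $\tilde L'$ is a set of \emph{independent} hoops, which I would do using the representatives $l_e$ just constructed. Condition~(1) holds because $\gamma$ is a graph: any two consecutive edges of $l_e$ are distinct --- by reducedness they are not mutual inverses, and they cannot coincide since the two endpoints of an edge differ --- hence near their common vertex they meet only in that vertex. For condition~(2) take the distinguished edge of $l_e$ to be $e$ itself. Every other edge of $l_e$ is a tree edge, hence a different edge of $\gamma$, so it meets $e$ in a subset of a two-point set; and for $e'\ne e$ in $\gamma\setminus T$ the loop $l_{e'}$ consists of the edge $e'$ together with tree edges, each of which meets $e$ only in endpoints, so $e\cap l_{e'}$ is finite. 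Thus the hoops $\tilde l_e$ are independent and $\tilde L'$ is the required set.

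The main obstacle is the first step: replacing the given finite family of piecewise-analytic loops by one supported on a single graph. This is exactly the point where real analyticity --- rather than mere smoothness --- is indispensable, since it forces two edges either to overlap along a subedge or to meet in only finitely many points, which is what makes the simultaneous subdivision into a graph possible. Once $\gamma$ is in hand, the remaining steps --- choosing a spanning tree, identifying the closed-path hoops with $H_1(\gamma;\Z)$, and checking the two independence conditions --- are routine.
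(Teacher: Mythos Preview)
Your argument is correct, and the paper does not actually supply its own proof of this lemma---it is quoted from \cite{al-hoop} without proof. Your approach (subdivide the given loops so that they live on a common graph $\gamma$, pick a maximal tree $T$, and take as generators the loops $l_e$ associated with the non-tree edges $e\in\gamma\setminus T$) is precisely the standard construction from that reference. It is also the same maximal-tree construction that the present paper invokes in Appendix~\ref{gauge-inv} for a different purpose (reducing gauge-invariant cylindrical functions on $K_\gamma$ to functions of loop holonomies), so your proof meshes naturally with the rest of the paper.

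One small remark on the independence check: your verification of condition~(1) is fine, but the phrase ``they cannot coincide since the two endpoints of an edge differ'' only rules out $e_{Ii}=e_{I(i+1)}$; two consecutive edges could still share \emph{both} endpoints while being distinct edges of $\gamma$. This does not cause trouble, because distinct edges of a graph meet only at vertices, so in a neighbourhood of the common target the intersection is still just that single point---but it is worth stating the reason explicitly.
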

\noindent This lemma implies that the set $\tilde{\mathbf{L}}$ of all sets of independent hoops is a directed set with a directing relation defined as follows: $\tilde{L}'\geq \tilde{L}$ if each hoop in $\tilde{L}$ is a composition of hoops in $\tilde{L}'$ and their inverses \cite{proj}. 

Each set $\tilde{L}=\{\tilde{l}_1,\ldots,\tilde{l}_N\}$ of independent hoops defines a finite set of configurational elementary d.o.f.:
\[
K_{\tilde{L}}:=\{\kappa_{\tilde{l}_1},\ldots,\kappa_{\tilde{l}_N}\}.
\]

\begin{lm}{\rm \cite{al-hoop}}
Let $\tilde{L}=\{\tilde{l}_1,\ldots,\tilde{l}_N\}$ be a set of independent hoops. Then for every $(x_1,\ldots,x_N)\in \R^N$ there exists a one-form $A\in Q$ such that
\[
\kappa_{\tilde{l}_I}(A)=x_I.
\]
\label{K-bij}
\end{lm}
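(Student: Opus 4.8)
The plan is to reduce the statement to the construction of a family of smooth one-forms dual to the loops. Concretely: since the hoops $\tilde l_1,\dots,\tilde l_N$ are independent, fix independent representative loops $l_I=e_{In_I}\circ\cdots\circ e_{I1}$ of them, so that $\kappa_{\tilde l_I}(A)=\int_{l_I}A$ for every $A\in Q$. I will construct smooth one-forms $\beta_1,\dots,\beta_N$ on $\Sigma$ with $\int_{l_J}\beta_I=\delta_{IJ}$; then $A:=\sum_{I=1}^N x_I\,\beta_I$ is the desired one-form, because linearity of the edge integral gives $\kappa_{\tilde l_J}(A)=\int_{l_J}A=\sum_I x_I\int_{l_J}\beta_I=x_J$. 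So the whole problem is the construction of the $\beta_I$.

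For a fixed $I$ I would use the distinguished edge $e_{Im_I}$ provided by the definition of independent loops. By conditions 2(a) and 2(b) of that definition, $e_{Im_I}$ meets each of the other edges $e_{Ij}$ (with $j\neq m_I$) of $l_I$ and each of the other loops $l_J$ (with $J\neq I$) in only a finite set of points. Deleting all these points, together with the finitely many junctions of the analytic pieces of $e_{Im_I}$, still leaves an uncountable subset of the interior of $e_{Im_I}$, so I can pick a point $p_I$ in it at which $e_{Im_I}$ is a real-analytic --- in particular $C^\infty$ --- embedded curve. Then I would take a coordinate chart around $p_I$ in which $e_{Im_I}$ is straightened to a coordinate axis (with its orientation that of increasing coordinate), and shrink it to a ball $B_I\ni p_I$ so small that $B_I$ is disjoint from $e_{Ij}$ for every $j\neq m_I$ and from $l_J$ for every $J\neq I$ (this is possible because these are closed sets not containing $p_I$) and that $B_I\cap e_{Im_I}$ is a single coordinate arc. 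Finally $\beta_I$ would be a standard bump one-form supported in $B_I$: a transverse bump function, equal to $1$ on the arc, times $h_I(t)\,dt$, where $h_I\geq 0$ is a bump function on the parameter interval with $\int h_I\,dt=1$; I flip the sign of $h_I$ if the chart happens to orient the arc against $e_{Im_I}$.

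With this in hand the identity $\int_{l_J}\beta_I=\delta_{IJ}$ should follow immediately. For $J\neq I$ the loop $l_J$ never enters $\supp\beta_I\subset B_I$, so the integral is $0$. For $J=I$, writing $\int_{l_I}\beta_I=\sum_j\int_{e_{Ij}}\beta_I$, every term with $j\neq m_I$ vanishes because $e_{Ij}\cap B_I=\emptyset$, while the $j=m_I$ term equals $\int h_I\,dt=1$: indeed $e_{Im_I}$ is embedded and meets $\supp\beta_I$ exactly along the arc on which the transverse bump is $1$, and $l_I$ traverses that arc once in the positive sense.

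I expect the only genuinely non-trivial step to be the extraction of the point $p_I$ and the ball $B_I$, i.e.\ the observation that on each loop there is an analytic sub-arc disjoint from all the other edges of the same loop and from all the other loops --- this is exactly the content of the independence conditions 2(a)--(b), and it is there to make room for bump data that does not affect any of the holonomies $\kappa_{\tilde l_J}$ with $J\neq I$. Once such an arc is isolated, the straightening, the construction of the bump one-form, and the evaluation of the integrals are all routine; condition 1 of the definition of independence plays no role here.
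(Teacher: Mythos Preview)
The paper does not supply its own proof of this lemma: it is quoted verbatim as a result of Ashtekar and Lewandowski \cite{al-hoop}. Your argument is correct and is precisely the standard one --- isolate on each distinguished edge $e_{Im_I}$ a short analytic arc missed by all other edges and loops, and support a bump one-form there so that $\int_{l_J}\beta_I=\delta_{IJ}$. This is exactly the construction behind the cited reference, and the paper itself invokes the same device elsewhere (the choice of a one-form $A$ with support in a subedge $E_n$ in Appendix~\ref{proof}, and the choice of faces $S_I$ with $\hat\varphi_{S_I}\kappa_{\tll_J}=\delta_{IJ}$ in Section~\ref{dir-set}). Your remark that condition~1 of the definition of independent loops is not needed here is also correct; only conditions~2(a) and~2(b) matter, since they guarantee the existence of the free arc.
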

\noindent It follows from the lemma that the image of the map $\tilde{K}_{\tilde{L}}$ (see the formula \eqref{K-fr}) is $\R^N$ (with $N$ being the number of elements of $\tilde{L}$) which means that the d.o.f. in $K_{\tilde{L}}$ are independent and the space $\tilde{Q}_{K_{\tilde{L}}}$ is a reduced configuration space on which the map $\tilde{K}_{\tilde{L}}$ defines a linear structure by a pull-back of the linear structure on $\R^N$.     

It is obvious that for any hoops $\tilde{l},\tilde{l}_1,\tilde{l}_2$ 
\begin{align}
\kappa_{\tilde{l}^{-1}}&=-\kappa_{\tilde{l}}, & \kappa_{\tilde{l}_1\circ \tilde{l}_2}=&\kappa_{\tilde{l}_1}+\kappa_{\tilde{l}_2}.
\label{kkk}
\end{align}
These equations and Lemma \ref{h-h} give us
\begin{lm}
  For every finite set $K_0$ of configurational elementary d.o.f. there exists a set $\tilde{L}$ of independent hoops such that every d.o.f. in $K_0$ is a linear combination of d.o.f. in $K_{\tilde{L}}$.
  \label{K0-lin-KL}
\end{lm}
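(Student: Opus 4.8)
\section*{Proof proposal}

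The plan is to reduce the statement to Lemma \ref{h-h} by translating between loops, hoops and d.o.f., and then to convert the resulting group-theoretic decomposition of hoops into a linear decomposition of d.o.f.\ using the identities \eqref{kkk}. First I would unpack the hypothesis. By the definition \eqref{cedof} of $\K$, a finite set $K_0$ of configurational elementary d.o.f.\ is a finite collection $\{\kappa_{l_1},\ldots,\kappa_{l_M}\}$ with each $l_i\in\mathbb{L}$ non-trivial. Since $\kappa_{l_i}=\kappa_{\tilde{l}_i}$ depends only on the hoop $\tilde{l}_i\in\h\G$, and since each $\kappa_{l_i}\neq 0$ forces $\tilde{l}_i$ to be non-trivial, the set $K_0$ determines a finite set $\{\tilde{l}_1,\ldots,\tilde{l}_M\}$ of hoops.

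Next I would invoke Lemma \ref{h-h} for the family $\{\tilde{l}_1,\ldots,\tilde{l}_M\}$: it provides a set $\tilde{L}=\{\tilde{m}_1,\ldots,\tilde{m}_N\}$ of independent hoops such that each $\tilde{l}_i$ is a composition of elements of $\tilde{L}$ and their inverses. Because ``independent'' presupposes non-triviality, every $\tilde{m}_j$ is non-trivial, so $\kappa_{\tilde{m}_j}\neq 0$ and hence $K_{\tilde{L}}=\{\kappa_{\tilde{m}_1},\ldots,\kappa_{\tilde{m}_N}\}$ is a legitimate finite set of configurational elementary d.o.f.\ built from a set of independent hoops, exactly as required in the statement.

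It then remains to turn the compositional decomposition into a linear one. Writing $\tilde{l}_i=\tilde{m}_{j_1}^{\eps_1}\circ\cdots\circ\tilde{m}_{j_k}^{\eps_k}$ with each $\eps_a=\pm 1$, a straightforward induction on $k$ using the second identity in \eqref{kkk} (and the first one to absorb the inverses) gives
\[
\kappa_{l_i}=\kappa_{\tilde{l}_i}=\sum_{a=1}^{k}\eps_a\,\kappa_{\tilde{m}_{j_a}},
\]
which, after collecting equal terms, exhibits $\kappa_{l_i}$ as a real (indeed integer) linear combination of the d.o.f.\ in $K_{\tilde{L}}$. Since $i$ was arbitrary, $\tilde{L}$ has the asserted property.

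I do not anticipate a real obstacle: the substantive content is carried entirely by Lemma \ref{h-h}, and the remaining work is bookkeeping. The only points that deserve a moment of care are the passage from d.o.f.\ to hoops (to make sure that $K_{\tilde{L}}$ is genuinely of the admissible type, which follows from non-triviality of independent hoops) and the observation that \eqref{kkk}, stated for two hoops, extends by induction to a composition of arbitrary finite length.
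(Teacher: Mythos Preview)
Your proposal is correct and follows precisely the route the paper takes: the paper derives Lemma~\ref{K0-lin-KL} directly from Lemma~\ref{h-h} together with the identities~\eqref{kkk}, and your argument is a careful unpacking of exactly that. The only additional content you provide is the routine bookkeeping (passage from loops to hoops, non-triviality of independent hoops, and the inductive extension of~\eqref{kkk}), all of which is straightforward and accurate.
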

\noindent It turns out that 
\begin{lm}
Let $\tl{L}$ be a set of independent hoops. If a d.o.f. $\kappa_{\tilde{l}}$ is a linear combination of d.o.f. in $K_{\tilde{L}}$ then the hoop $\tilde{l}$ is a composition of hoops in $\tilde{L}$ and their inverses.
\label{k-akak}
\end{lm}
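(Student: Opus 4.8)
The plan is to reduce the claim to the statement that the coefficients in the linear combination are integers, and then to extract those coefficients by means of integer-valued homomorphisms of the hoop group attached to the distinguished edges of the independent loops.

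\emph{Reduction to integrality.} Write the hypothesis as $\kappa_{\tilde{l}}=\sum_{I=1}^{N}c_I\,\kappa_{\tilde{l}_I}$ with $c_I\in\R$. By Lemma~\ref{K-bij} the functions $\kappa_{\tilde{l}_1},\dots,\kappa_{\tilde{l}_N}$ are linearly independent over $\R$, so the $c_I$ are uniquely determined. It is enough to prove that every $c_I$ is an integer: in that case the hoop $\tilde{m}:=\tilde{l}_1^{c_1}\circ\cdots\circ\tilde{l}_N^{c_N}$ is a composition of hoops in $\tilde{L}$ and their inverses, and iterating \eqref{kkk} gives $\kappa_{\tilde{m}}=\sum_I c_I\,\kappa_{\tilde{l}_I}=\kappa_{\tilde{l}}$; since loops representing $\tilde{l}$ and $\tilde{m}$ then take equal values on every $A\in Q$, they are $\sim$-equivalent, i.e.\ $\tilde{l}=\tilde{m}$, which is the assertion of the lemma.

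\emph{Passing to a common refinement.} Apply Lemma~\ref{h-h} to the finite set $\{\tilde{l},\tilde{l}_1,\dots,\tilde{l}_N\}$ to get a set $\tilde{L}''=\{\tilde{m}_1,\dots,\tilde{m}_M\}$ of independent hoops such that $\tilde{l}$ and every $\tilde{l}_I$ is a composition of hoops in $\tilde{L}''$ and their inverses. Since $\h\G$ is Abelian we may write $\tilde{l}=\tilde{m}_1^{b_1}\circ\cdots\circ\tilde{m}_M^{b_M}$ and $\tilde{l}_I=\tilde{m}_1^{a_{I1}}\circ\cdots\circ\tilde{m}_M^{a_{IM}}$ with $b_j,a_{Ij}\in\Z$. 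Then $\kappa_{\tilde{l}}=\sum_j b_j\kappa_{\tilde{m}_j}$ and $\kappa_{\tilde{l}_I}=\sum_j a_{Ij}\kappa_{\tilde{m}_j}$ by \eqref{kkk}; substituting into the hypothesis and using the $\R$-linear independence of $\kappa_{\tilde{m}_1},\dots,\kappa_{\tilde{m}_M}$ (Lemma~\ref{K-bij} for $\tilde{L}''$) one obtains
\[
b_j=\sum_{I=1}^{N}c_I\,a_{Ij},\qquad j=1,\dots,M .
\]

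\emph{Extracting $c_J$.} Fix $J$ and suppose we are given a group homomorphism $n_J\colon\langle\tilde{L}''\rangle\to\Z$ on the subgroup of $\h\G$ generated by $\tilde{L}''$, with $n_J(\tilde{l}_I)=\delta_{IJ}$. Applying it to $\tilde{l}$,
\[
n_J(\tilde{l})=\sum_{j}b_j\,n_J(\tilde{m}_j)=\sum_j\Big(\sum_I c_I\,a_{Ij}\Big)n_J(\tilde{m}_j)=\sum_I c_I\,n_J(\tilde{l}_I)=c_J ,
\]
so $c_J=n_J(\tilde{l})\in\Z$, and by the reduction step the proof is complete. To build $n_J$ I would use the independence of a representative family $\{l_1,\dots,l_N\}$ of $\tilde{L}$: the loop $l_J$ carries a distinguished edge $e_{Jm_J}$ which meets the remaining edges of $l_J$, and each loop $l_I$ with $I\neq J$, only in finite (or empty) sets; choosing a point $p$ in the interior of $e_{Jm_J}$ off all these finitely many intersection points, one defines $n_J(\tilde{k})$ as the net signed number of times a representative of $\tilde{k}$ runs along $e_{Jm_J}$ in a small neighbourhood of $p$. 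Near such $p$ only $l_J$ runs along $e_{Jm_J}$, and exactly once, which forces $n_J(\tilde{l}_I)=\delta_{IJ}$, while additivity under composition of hoops is evident.

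\emph{The main obstacle.} The reduction and the refinement are bookkeeping; the real content is the construction of $n_J$, i.e.\ showing that the ``net traversal count'' is independent of the choice of $p$ and of the loops representing the hoops and genuinely descends to a well-defined homomorphism on $\langle\tilde{L}''\rangle$ with $n_J(\tilde{l}_I)=\delta_{IJ}$. Equivalently, via the theory of finitely generated abelian groups (Smith normal form of the inclusion $\langle\tilde{L}\rangle\hookrightarrow\langle\tilde{L}''\rangle$), one must show that the subgroup of $\h\G$ generated by a set of independent hoops is a primitive, i.e.\ saturated, subgroup. This is precisely the point at which the two defining conditions of independence --- and the way a refinement supplied by Lemma~\ref{h-h} sits over the edges of the original loops --- must be used in full; once it is established, $c_J\in\Z$, and hence the lemma, follows at once.
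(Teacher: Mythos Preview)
Your reduction to integrality of the coefficients, and the idea of reading off $c_J$ from the distinguished edge $e_{Jm_J}$ of the $J$-th independent loop, are exactly the paper's strategy. Two points, however, deserve comment.

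First, the detour through a common refinement $\tilde{L}''$ via Lemma~\ref{h-h} is unnecessary. The paper simply evaluates the identity $\kappa_{\tilde l}=\sum_I a_I\kappa_{\tilde l_I}$ at a single one-form $A$, chosen once and for all relative to fixed representative loops $l,l_1,\dots,l_N$. No ambient subgroup, no Smith normal form, and no primitivity statement is needed; one only needs $\kappa_{l_I}(A)=\delta_{IJ}\cdot\int_{E}A$ and $\kappa_l(A)\in\Z\cdot\int_EA$ for a suitable sub-edge $E$.

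Second, the ``main obstacle'' you flag is both easier and more concrete than you suggest. The paper resolves it by an explicit shrinking construction rather than by abstract well-definedness: starting from $E_0:=e_{Jm_J}$, one runs through the factors $e_1,\dots,e_n$ of the fixed representative $l=e_n\circ\cdots\circ e_1$ and, using piecewise analyticity, replaces $E_{k-1}$ by a sub-edge $E_k$ that is either disjoint from $e_k$ or contained in $e_k$. The final $E_n\subset e_{Jm_J}$ is then either disjoint from or contained in every factor of $l$. Choosing $A$ with $\int_{E_n}A\neq 0$ and support small enough that $\supp A\cap l\subset E_n$, $\supp A\cap l_J\subset E_n$, and $\supp A\cap l_I=\varnothing$ for $I\neq J$, one gets $(j_+-j_-)\int_{E_n}A=a_J\int_{E_n}A$, where $j_\pm$ count the factors of $l$ containing $E_n$ with each orientation; hence $a_J\in\Z$. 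This is precisely your ``net traversal count'', but realised as $\kappa_\cdot(A)$, so well-definedness on hoops is automatic and independence of the auxiliary choices is never an issue: one fixed $E_n$ suffices.
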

\noindent For a proof of this lemma see Appendix \ref{proof}.
\begin{pro}
Let $\tilde{L}',\tilde{L}$ be sets of independent hoops. Then $\tilde{L}'\geq\tilde{L}$ if and only if each d.o.f. in $K_{\tilde{L}}$ is a linear combination of d.o.f. in $K_{\tilde{L}'}$.   
\label{L'L-lin}
\end{pro}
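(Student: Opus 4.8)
The plan is to prove the two implications separately, using only the homomorphism property of the map $\tilde{l}\mapsto\kappa_{\tilde{l}}$ recorded in \eqref{kkk} (which says precisely that this map is a homomorphism from the Abelian group $\h\G$ of hoops into the additive group $\R$) together with the already-available Lemma \ref{k-akak}.

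For the ``only if'' direction I would assume $\tilde{L}'\geq\tilde{L}$. By the definition of the directing relation, each hoop $\tilde{l}_I\in\tilde{L}$ is a composition of hoops in $\tilde{L}'=\{\tilde{l}'_1,\ldots,\tilde{l}'_M\}$ and their inverses. Applying \eqref{kkk} repeatedly to such an expression collapses it to $\kappa_{\tilde{l}_I}=\sum_{j=1}^{M}c_{Ij}\,\kappa_{\tilde{l}'_j}$ with integer coefficients $c_{Ij}$; in particular $\kappa_{\tilde{l}_I}$ is a finite linear combination of the d.o.f. in $K_{\tilde{L}'}$. Since $I$ was arbitrary, every d.o.f. in $K_{\tilde{L}}$ is such a linear combination. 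This direction is pure bookkeeping and I expect no difficulty.

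For the ``if'' direction I would assume that every d.o.f. in $K_{\tilde{L}}$ is a linear combination of d.o.f. in $K_{\tilde{L}'}$. Fixing $I$, the d.o.f. $\kappa_{\tilde{l}_I}$ is then a linear combination of d.o.f. in $K_{\tilde{L}'}$; since $\tilde{L}'$ is a set of independent hoops, Lemma \ref{k-akak} applies verbatim and yields that $\tilde{l}_I$ is a composition of hoops in $\tilde{L}'$ and their inverses. As this holds for every $I=1,\ldots,N$, the definition of $\geq$ gives $\tilde{L}'\geq\tilde{L}$, which closes the argument.

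The only non-trivial ingredient is Lemma \ref{k-akak}, whose proof is deferred to the Appendix; granting it, the Proposition follows immediately, so I anticipate the genuine obstacle to reside in that lemma (which must rule out ``accidental'' linear relations among the $\kappa_{\tilde{l}'_j}$ that are not of group-theoretic origin) rather than in the short deduction sketched here.
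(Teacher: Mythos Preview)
Your proposal is correct and follows essentially the same approach as the paper's own proof: the ``only if'' direction uses Equations \eqref{kkk} to turn compositions into linear combinations, and the ``if'' direction is an immediate application of Lemma \ref{k-akak}. The additional remarks you make (integrality of the coefficients, the real work being in Lemma \ref{k-akak}) are accurate but go slightly beyond what the paper records.
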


\begin{proof}
If $\tilde{L}'\geq\tilde{L}$ then by virtue of Equations \eqref{kkk} each d.o.f. in $K_{\tilde{L}}$ is a linear combination of d.o.f. in $K_{\tilde{L}'}$.

If every d.o.f. in $K_{\tilde{L}}$ is a linear combination of d.o.f. in $K_{\tilde{L}'}$ then by virtue of Lemma \ref{k-akak} every hoop in $\tilde{L}$ is a composition of hoops in  $\tilde{L}'$ and their inverses. Therefore $\tilde{L}'\geq\tilde{L}$. 
\end{proof}

%***************************************************
\subsubsection{Cylindrical functions}
%***************************************************

It follows from Lemma \ref{K-bij} that there are finite sets of configurational elementary d.o.f. \eqref{cedof} which define reduced configuration spaces. This fact allows us to consider cylindrical functions on $\tilde{Q}$ given by smooth functions on these spaces. Now let us check whether a linear space $\Cyl$ spanned by all the cylindrical functions will serve its purpose which is to be a domain of momentum operators \eqref{mom-op} associated with momentum elementary d.o.f. To this end we will apply 

\begin{pro}
{\rm \cite{q-nonl} } Let $\mathbf{K}$ be the set of all sets of independent configurational d.o.f.. Suppose that there exists a subset $\mathbf{K}'$ of $\mathbf{K}$ such that  for every finite set $K_0$ of configurational elementary d.o.f. there exists $K'_0\in\mathbf{K}'$ satisfying the following conditions: 
\begin{enumerate}
\item the map $\tilde{K}'_0$ is a bijection; 
\item each d.o.f. in $K_0$ is a linear combination of d.o.f. in $K'_0$.  
\end{enumerate} 
Then
\begin{enumerate}
\item if $Q_{K}=Q_{K'}$ for $K,K'\in\mathbf{K}$ then the differential structures defined on the space by $\tilde{K}$ and $\tilde{K}'$ coincide.
\item for every element $\Psi\in\Cyl$ there exists a set $K\in\mathbf{K}'$ such that $\Psi$ is compatible with $K$.    
\end{enumerate}
\label{big-pro}
\end{pro}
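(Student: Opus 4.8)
The plan is to build the required subset $\mathbf{K}'\subset\mathbf{K}$ explicitly from the hoop formalism and then read off both conclusions from the lemmas already established. Concretely, I would set
\[
\mathbf{K}':=\{\ K_{\tilde{L}}\ |\ \tilde{L}\text{ is a set of independent hoops}\ \},
\]
which is a subset of $\mathbf{K}$ because, by Lemma \ref{K-bij}, the d.o.f. in each $K_{\tilde{L}}$ are independent. The hypothesis of the Proposition is then verified as follows: given a finite set $K_0$ of configurational elementary d.o.f., Lemma \ref{K0-lin-KL} produces a set $\tilde{L}$ of independent hoops such that every d.o.f.\ in $K_0$ is a linear combination of d.o.f.\ in $K'_0:=K_{\tilde{L}}\in\mathbf{K}'$; this is condition (2). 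Condition (1), that $\tilde{K}'_0$ is a bijection, is exactly the content of Lemma \ref{K-bij} together with the remark following it (the image of $\tilde{K}_{\tilde{L}}$ is all of $\R^N$, and injectivity holds because the level-set construction identifies points precisely when all $\kappa_{\tilde{l}_I}$ agree, while the map to $\R^N$ is $[q]\mapsto(\kappa_{\tilde{l}_1}(q),\dots)$). So the set-up of the Proposition is not an extra hypothesis to be checked against the world — it is realised by the hoop sets, and the real work is deriving conclusions 1 and 2 from it.

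For conclusion 1, suppose $Q_{K}=Q_{K'}$ with $K,K'\in\mathbf{K}$ (here $Q$ stands for $\tilde Q$). I would show that $\tilde K$ and $\tilde K'$ induce the same smooth structure by producing a common refinement: apply the hypothesis to $K_0:=K\cup K'$ to get $K'_0\in\mathbf{K}'$ with $\tilde K'_0$ a bijection and every d.o.f.\ in $K$ and in $K'$ a linear combination of those in $K'_0$. A linear combination of the coordinate functions $x_1,\dots,x_M$ of $\tilde K'_0$ is a smooth (indeed affine) function on $Q_{K'_0}$, so the maps $Q_{K'_0}\to Q_K$ and $Q_{K'_0}\to Q_{K'}$ (which are well defined since the fibres of the finer partition refine the coarser ones) are smooth, in fact affine, with respect to the structures defined by $\tilde K'_0$, $\tilde K$, $\tilde K'$ respectively. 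Since $Q_K=Q_{K'}$ as sets and both projections from $Q_{K'_0}$ have the same fibres (the level sets of $K$ coincide with those of $K'$), these projections agree, and the structure pulled back from $\R^{\dim Q_K}$ via $\tilde K$ equals the one pulled back via $\tilde K'$: an affine surjection from a fixed affine space onto a set determines at most one compatible affine structure on the target. I expect this to require a small amount of care about the fact that $Q_K$ need not be all of some $\R^N$ a priori, but the bijectivity of $\tilde K'_0$ and Assumption-type arguments handle it.

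For conclusion 2, let $\Psi\in\Cyl$. By definition $\Psi$ is a finite linear combination of cylindrical functions $\Psi_1,\dots,\Psi_r$, where $\Psi_j=\pr_{K_j}^*\psi_j$ is compatible with some $K_j\in\mathbf{K}$. Take $K_0:=K_1\cup\dots\cup K_r$ and apply the hypothesis to obtain $K\in\mathbf{K}'$ with $\tilde K$ a bijection such that every d.o.f.\ in each $K_j$ is a linear combination of d.o.f.\ in $K$. The point is that this makes each projection $\pr_K\to\pr_{K_j}$ factor through a smooth map $\pi_j\colon Q_K\to Q_{K_j}$, whence $\Psi_j=\pr_K^*(\pi_j^*\psi_j)$ with $\pi_j^*\psi_j$ smooth on $Q_K$; summing, $\Psi=\pr_K^*(\sum_j c_j\,\pi_j^*\psi_j)$ is compatible with $K$. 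The main obstacle, and the only genuinely non-trivial input, is establishing that ``every d.o.f.\ in $K_j$ is a linear combination of d.o.f.\ in $K$'' really does yield a well-defined \emph{smooth} map $\pi_j$ between the reduced configuration spaces — i.e.\ that the linear-combination relation at the level of functions on $\tilde Q$ descends to the quotients compatibly with the coordinate frames; this is where bijectivity of $\tilde K$ (so that $Q_K\cong\R^{\dim}$) is used, since a linear map $\R^{\dim}\to\R^{N_j}$ automatically restricts to a smooth map onto $\tilde K_j(Q_{K_j})$ and the level sets match up by construction. Everything else is bookkeeping with \eqref{kkk} and the preceding lemmas.
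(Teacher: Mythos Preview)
You have partly misread what is being asked. Proposition~\ref{big-pro} is an \emph{abstract} statement, imported from \cite{q-nonl}: its hypothesis posits the existence of a subset $\mathbf{K}'\subset\mathbf{K}$ with the two listed properties, and its content is that the two conclusions follow from that hypothesis. The present paper does not prove the proposition at all; immediately after stating it the paper simply observes (using Lemmas~\ref{K-bij} and~\ref{K0-lin-KL}) that the hoop sets $\{K_{\tilde L}\}_{\tilde L\in\tilde{\mathbf L}}$ realise the hypothesis, so that the conclusions apply in the DPG construction. Your opening paragraphs, in which you ``build the required subset $\mathbf{K}'$'' from the hoop formalism, are therefore not a proof of the proposition but exactly that verification of the hypothesis---which is fine, but it is not what a proof of Proposition~\ref{big-pro} should contain, and your own sentence ``the set-up of the Proposition is not an extra hypothesis to be checked against the world'' shows you half-recognise this.

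That said, your sketches for conclusions~1 and~2 do use only the abstract hypothesis and are essentially correct. For conclusion~1 you produce a common refinement $K'_0\in\mathbf{K}'$ of $K\cup K'$; the key point you use---that a surjective linear map $\R^M\to\R^N$ with the same kernel as another surjective linear map $\R^M\to\R^{N'}$ forces a linear isomorphism between the targets---is right, and it gives that $\tilde K'\circ\tilde K^{-1}$ is (the restriction of) a linear isomorphism, hence the smooth structures agree. (Your phrasing ``an affine surjection \ldots\ determines at most one compatible affine structure on the target'' is imprecise; what you actually need is the factorisation just described, and you should also note that the image of the linear map $\R^M\to\R^N$ is an open subspace, hence all of $\R^N$, which is why surjectivity holds.) For conclusion~2 your argument is clean: each $\Psi_j$ becomes compatible with the common $K\in\mathbf{K}'$ via a linear projection $Q_K\to Q_{K_j}$, so their linear combination is compatible with $K$. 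Since the paper gives no proof to compare against, the relevant comment is simply that your abstract arguments are sound once detached from the unnecessary hoop-specific preamble.
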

\noindent Note that due to Lemmas \ref{K-bij} and \ref{K0-lin-KL} the set $\{K_{\tilde{L}}\}_{\tilde{L}\in\tilde{\mathbf{L}}}$ satisfies assumptions imposed in the proposition on the set $\mathbf{K}'$. Thus the assertions of the propositions are valid in the case of our construction. 

The first assertion of the proposition guarantees that in our construction every reduced configurational space $\tilde{Q}_K$ possesses a unique differential structure even if there exists a set $K'\neq K$ of independent d.o.f. such that $\tl{Q}_K=\tl{Q}_{K'}$. Note that if it was not the case then distinct differential structures on the same reduced configuration space would  be a source of problems taking into account that the action of momentum operators on cylindrical functions involves a differentiation of functions defined on reduced configuration spaces (see \eqref{cyl-f} and \eqref{mom-op}).   

The second assertion will allow us to define momentum operators by describing their actions on cylindrical functions compatible with sets of d.o.f. given by sets of independent hoops.

%***************************************************
\subsection{Momentum elementary d.o.f.}
%***************************************************

%***************************************************
\subsubsection{Faces \label{sec-lf}}
%***************************************************
Let $S$ be an analytic two-dimensional embedded submanifold of $\Sigma$. We say that an edge $e$ is adapted to $S$ if either \cite{area} 
\begin{enumerate}
\item $e$ is contained in the closure $cl(S)$ or has no common points with $S$;  
\item $e$ has exactly one common point with $S$ being either the source or the target of $e$.    
\end{enumerate}
In the latter case the edge $e$ will be called {\em transversal to} $S$.

Consider now a loop 
\begin{equation}
l=e_n\circ\ldots\circ e_1
\label{l-ad-ee}
\end{equation}
composed from edges adapted to the submanifold $S$ and assume that $S$ is {\em oriented}. We will now associate with $S$ and $l$ a half-integer $\eps(S,l)$ which will be used later on to define momentum operators.      
   
To define the half-integer $\eps(S,l)$ let us count those edges in \eqref{l-ad-ee} which are transversal to $S$ treating them as factors of the composition i.e. if two edges $e_i$ and $e_j$ appear in the composition at different positions then they will be treated as distinct even if they coincide as oriented submanifolds of $\Sigma$. Let      
\begin{enumerate}
\item $t^+$  be the number of factors in \eqref{l-ad-ee} which intersect $S$ only at their targets and are placed 'above' the face;
\item $s^+$ be the number of factors in \eqref{l-ad-ee} which intersect $S$ only at their sources and are placed 'above' the face;
\item $t^-$  be the number of factors in \eqref{l-ad-ee} which intersect $S$ only at their targets and are placed 'below' the face;
\item $s^-$ be the number of factors in \eqref{l-ad-ee} which intersect $S$ only at their sources and are placed 'below' the face.
\end{enumerate}
Here the terms 'below' and 'above' refer to the orientation of the normal bundle of $S$ which is defined by the orientations of $S$ and $\Sigma$. Then 
\begin{equation}
\eps(S,l):=\frac{1}{2}\big(t^+-s^+-(t^--s^-)\big).
\label{eps-0}
\end{equation}

An analytic two-dimensional embedded submanifold $S$ of $\Sigma$ will be called a {\em face} if
\begin{enumerate}
\item $S$ is connected, oriented and its closure $cl(S)$ is compact;
\item every edge $e$ in $\Sigma$ is a composition $e_N\circ\ldots\circ e_1$ of edges {\em adapted} to $S$;
\item there exists a non-trivial loop $l$ composed of edges adapted to $S$ such that $\eps(S,l)\neq 0$. 
\end{enumerate}

Let us now comment on the last requirement in the definition of a face since to the best of our knowledge it has not been introduced before (the remaining requirements imposed on a face can be found in e.g. \cite{area}, \cite{acz} and \cite{q-nonl}). This requirement may seem to be formulated in a quite complicated way, but the only reason we introduced it is that it will guarantee immediately that for every face  a momentum operator associated with the face is non-zero. Therefore we do not need any simpler formulation of the requirement.

If a submanifold $S$ satisfies all the requirements of the definition of a face except the last one and if $cl(S)\setminus S$ is non-empty then $S$ meets the last requirement as well and thereby is a face. To see this let us fix a point $y$ in $cl(S)\setminus S$. Then we can find a point $y'\in S$ such that $y$ and $y'$ can be connected  by $(i)$ an edge $e_1$ contained in $cl(S)$ and by $(ii)$ an edge $e_2$ such that $e_2\cap S=y'$. Once suitably oriented the edges $e_1$ and $e_2$ compose a loop $l$ such that $\eps(S,l)=1/2$.            

If $cl(S)\setminus S$ is empty then $S$ may be or may be not a face. For example, a two-sphere $S^2$ once embedded in $\Sigma$ in such a way that it is a boundary of a three-dimensional ball in $\Sigma$  does not meet the last requirement of the definition. But if $\Sigma=S^2\times S^1$, where $S^1$ is a circle and if $y$ is a point of $S^1$ then $S=S^2\times\{y\}$ is a face\footnote{I am very grateful to Wojciech Kami\'nski for suggesting me this example.}. Indeed, if $y'$ is a point in $S^2$ and a loop $l$ is chosen to be $\{y'\}\times S^1$ (it is a trivial task to express this loop as a composition of edges adapted to $S$) then $\eps(S,l)=\pm 1$ where the sign depends on the orientations of $S$ and $l$.                  

The set of all faces in $\Sigma$ will be denoted by $\mathbb{S}$.

%***************************************************
\subsubsection{Definition of momentum elementary d.o.f.}
%***************************************************

Let $S$ be a face. It defines a function on the space $P_{\rm G}$ of those momenta which satisfy the Gauss constraint: 
\[
P_{\rm G}\ni\sigma\mapsto \varphi_S(\sigma):=\int_S\sigma\in\R.
\]

We choose a set
\[
\{\ \varphi_S\neq 0 \ | \ S\in\mathbb{S}\ \}
\]
to be a set $\F$ of momentum elementary d.o.f.. It is not difficult to check that this set separates points in $P_{\rm G}$.

%***************************************************
\subsubsection{Momentum operators}
%***************************************************

It turns out that for any $\Psi\in\Cyl$ and $\varphi_S\in\F$ the Poisson bracket $\{\varphi_S,\Psi\}$ is ill defined, but by means of  a regularization \cite{acz} it is possible to obtain from it a linear operator $\hat{\varphi}_S$ on $\Cyl$ (called  a {\em flux operator} in the LQG literature). The action of the operator on elements of $\Cyl$ reads  
\begin{equation}
\hat{\varphi}_S\Psi:=\sum_{I=1}^N\eps(S,\tilde{l}_I)\, \pr_{K_{\tilde{L}}}^*(\partial_{x_I}\psi),
\label{hphi_S}
\end{equation}
where $(i)$ $\Psi\in\Cyl$ is treated as a cylindrical function $\pr_{K_{\tilde{L}}}^*\psi$ compatible with a set $K_{\tilde{L}}$ given by a set $\tilde{L}=\{\tll_1,\ldots,\tll_N\}$ of independent hoops (as it is allowed by Proposition \ref{big-pro}), $(ii)$ $\{\partial_{x_1},\ldots,\partial_{x_N}\}$ are vector fields on $\tilde{Q}_{K_{\tilde{L}}}$ defined by the natural coordinate frame $(x_1,\ldots,x_N)$ on the space (see \eqref{K-fr}) and $(iii)$ $\eps(S,\tilde{l}_I)$ is a real number associated with the hoop $\tilde{l}_I$ according to the following prescription.      

Consider a loop $l_0\in\tilde{l}_I$. Each edge composing the loop can be composed from edges adapted to the face $S$ which means that there is a loop 
\begin{equation}
l=e_n\circ\ldots\circ e_1
\label{l-ad-ee-1}
\end{equation}
in $\tilde{l}_I$ such that each edge $e_i$ is adapted to $S$. Then
\begin{equation}
\eps(S,\tilde{l}_I):=\eps(S,l)
\label{eps},
\end{equation}
where $\eps(S,l)$ is given by \eqref{eps-0}.
 
The above definition of the momentum operator $\hat{\varphi}_S$ requires to choose $(i)$ a set of independent hoops $\tilde{L}$ to describe the function $\Psi$ as a pull-back of a function $\psi$  and $(ii)$ a loop in each hoop $\tilde{l}_I\in \tilde{L}$. Therefore we have to show that the resulting operator do not depend on these choices. Since a proof of this fact is quite technical we relegated it to Appendix \ref{hphi-ok}.   

Taking into account Equations \eqref{hphi_S} and \eqref{eps} it is clear that due to the last requirement in the definition of a face every operator $\hat{\varphi}_S$ is non-zero.  

Assume that a hoop $\tll$ is a composition of independent hoops $\{\tll_1,\ldots,\tll_N\}=\tl{L}$ and their inverses (such a set $\tl{L}$ exists for every hoop due to Lemma \ref{h-h}). Then by virtue of \eqref{kkk}
\[
\kappa_{\tll}=\sum_{I=1}^Nn_I\kappa_{\tll_I},
\]
where $\{n_I\}$ are integers. This means that $\kappa_{\tll}$ is a cylindrical function compatible with $K_{\tl{L}}$. Then for every face $S$    
\begin{equation}
\hat{\varphi}_S\kappa_{\tll}=\sum_{I=1}^Nn_I\eps(S,\tll_I).
\label{hphiS-kl}
\end{equation}
Thus for every hoop $\tll$ and every face $S$   $\hat{\varphi}_S\kappa_{\tll}$ is a real {\em constant} cylindrical function. 

Let $\hat{\F}$ be a real linear space spanned by all momentum operators $\{ \hat{\varphi}_S \}_{S\in\mathbb{S}}$. Any element $\hat{\varphi}\in \hat{\F}$ is of the following form
\[
\hat{\varphi}=\sum_i \alpha_i\hat{\varphi}_{S_i},
\]  
where $\{\alpha_i\}$ are real numbers and the sum is finite. Let $\tl{L}=\{\tll_1,\ldots,\tll_N\}$ be a set of independent hoops and  $\Psi=\pr_{K_{\tl{L}}}\psi$ a cylindrical function compatible with ${K_{\tl{L}}}$. By virtue of  Equations \eqref{hphi_S} and \eqref{hphiS-kl} the action of the operator $\hat{\varphi}$ on $\Psi$ reads:
\begin{equation}
\hat{\varphi}\Psi=\sum_{iI}\alpha_i\eps(S_i,\tll_I)\, \big(\pr_{K_{\tl{L}}}^*\partial_{x_I}\psi\big)=\sum_i\big(\pr_{K_{\tl{L}}}^*\partial_{x_I}\psi\big)\,\hat{\varphi}\kappa_{\tll_I}.
\label{hphi-Psi}
\end{equation}       

%***************************************************
\subsection{A comment on analyticity of edges and faces}
%***************************************************

Let us finally explain why we assumed edges and faces to be (piecewise) analytic\footnote{Analyticity can be weakened to a property called {\em semi-analyticity} without spoiling the desired effect \cite{fl,lost}.}. The reason for this is that intersections of (piecewise) analytic submanifolds are of rather simple sort: $(i)$ the intersection of two piecewise analytic edges consists of a finite set of edges and/or a finite number of points \cite{al-hoop} and $(ii)$ the intersection of a piecewise analytic edge and an analytic surface consists of a set of edges and/or a finite number of points \cite{acz} (here the set of edges may be infinite if e.g. $S$ is obtained by removing from an analytic disc an infinite sequence of separate points converging to a point outside the disc; therefore defining a face we excluded this possibility requiring that every edge can be represented by a  composition of a {\em finite} number of edges adapted to the face---see Section \ref{sec-lf}.) In other words, each such an intersection does not contain an infinite number of isolated points which is essential for some constructions which were taken from LQG like the directed set $\tl{\mathbf{L}}$ of sets of independent hoops and the momentum operators \eqref{hphi_S}.

%***************************************************
\subsection{Directed set of finite physical systems \label{dir-set}}
%***************************************************

In the previous sections we have defined configurational elementary d.o.f. and momentum operators and described some of their properties. Now we are ready to organize them into a directed set $(\La,\geq)$ of (constrained) finite physical systems which will provide us with the desired constrained projective quantum states for DPG. 

To obtain the set $(\La,\geq)$ let us use a general construction described in \cite{lqg-tens}. Recall that in Section \ref{outline}  we denoted by $\mathbf{K}$ the set of all sets of independent d.o.f.  and by $\hat{\mathbf{F}}$ the set of all finite dimensional spaces of momentum operators chosen for a theory. A pair $(\hat{F},K)\in\hat{\mathbf{F}}\times \mathbf{K}$, where $K=\{\kappa_1,\ldots,\kappa_N\}$, is said to be {\em non-degenerate} if 
\begin{enumerate}
\item for every $\hat{\varphi}\in \hat{F}$ and every $\kappa_I\in K$ the function $\hat{\varphi}\kappa_I$ is a real {\em constant} function, 
\item $\dim \hat{F}=N$,
\item for a basis $(\hat{\varphi}_1,\ldots,\hat{\varphi}_N)$ of $\hat{F}$ the matrix \eqref{Gij} is non-degenerate.
\end{enumerate}

Suppose that $(Z,\geq)$ is a directed set and that elements of it label elements of $\mathbf{K}_Z \subset \mathbf{K}$, i.e.:
\[
\mathbf{K}_Z=\{K_{z}\}_{{z}\in Z},
\]   
where each $K_{z}$ is a set of independent d.o.f.. 

Let $\Lambda$ be a set which consists of all non-degenerate pairs $\{(\hat{F},K_{{z}})\}$, where ${z}$ runs through $Z$.  We say that $\lambda'=(\hat{F}',K_{{z} '})$ is greater or equal to $\lambda=(\hat{F},K_{{z} })$, $\lambda'\geq\lambda$, if $\hat{F}'\supset\hat{F}$ and ${z}'\geq {z}$. 
  
\begin{pro} {\rm \cite{lqg-tens}}
Suppose that 
\begin{enumerate}
\item \label{gamma>N} for every ${z}\in Z$ and for every natural number $N$ there exists ${z}'\in Z$ such that ${z}'\geq{z} $ and the number of elements of $K_{{z}'}$ is greater than $N$, 
\item \label{lin-comb-Kg} if ${z} '\geq{z} $ then each d.o.f. in $K_{{z}}$ is a cylindrical function compatible with  $K_{{z}'}$,
\item \label{F-Kg-ndeg} for every ${z}\in Z $ there exists $\hat{F}\in\mathbf{F}$ such that $(\hat{F},K_{{z}})\in \Lambda$.
\end{enumerate}
Assume moreover that the set $(\Lambda,\geq)$ satisfies Assumption \ref{k-Lambda}, \ref{comp-f} and \ref{const} (presented in Section \ref{outline}). Then $(\Lambda,\geq)$ is a directed set.
\label{Lambda-dir}     
\end{pro}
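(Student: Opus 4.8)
The plan is to check the two defining properties of a directed set for $(\La,\geq)$. Reflexivity and transitivity of $\geq$ are immediate from its definition, since $\supset$ is reflexive and transitive on subspaces of $\hat{\F}$ and $\geq$ has the same properties on the directed set $Z$; so the content is to produce, for any two elements $\la_1=(\hat F_1,K_{z_1})$ and $\la_2=(\hat F_2,K_{z_2})$ of $\La$, a pair $\la'=(\hat F',K_{z'})\in\La$ with $\la'\geq\la_1$ and $\la'\geq\la_2$. I would build $\la'$ by first choosing the configurational index $z'$ and then enlarging $\hat F_1+\hat F_2$ to $\hat F'$.

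For a set $K_z$ of independent d.o.f. write $\Phi_z$ for the linear map sending $\hat\varphi\in\hat{\F}$ to the tuple of constant functions $(\hat\varphi\kappa)_{\kappa\in K_z}$; this is well defined on all of $\hat{\F}$ by Assumption~\ref{const} together with hypothesis~\ref{F-Kg-ndeg} (which provides some $(\hat F,K_z)\in\La$). Two facts about these maps drive the argument. First, $z''\geq z'$ implies $\ker\Phi_{z''}\subseteq\ker\Phi_{z'}$: by hypothesis~\ref{lin-comb-Kg} each d.o.f. of $K_{z'}$ is a cylindrical function compatible with $K_{z''}$, so Assumption~\ref{comp-f} shows an operator annihilating $K_{z''}$ also annihilates $K_{z'}$. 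Second --- the crucial point --- $\bigcap_{z\in Z}\ker\Phi_z=\{0\}$: by Assumption~\ref{k-Lambda} every elementary configurational d.o.f., hence every cylindrical function, is compatible with some $K_z$, so (via Assumption~\ref{comp-f}) an operator lying in all $\ker\Phi_z$ annihilates a spanning set of $\Cyl$ and is the zero operator.

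Consequently the subspaces $V_z:=(\hat F_1+\hat F_2)\cap\ker\Phi_z$ of the finite-dimensional space $\hat F_1+\hat F_2$ are nested (non-increasing along $\geq$) with trivial intersection over $Z$; since $Z$ is directed, a routine stabilisation argument yields $z_0\in Z$ with $V_{z'}=\{0\}$ for all $z'\geq z_0$, i.e. $\Phi_{z'}$ injective on $\hat F_1+\hat F_2$. I would then pick $\bar z\geq z_0,z_1,z_2$ and, applying hypothesis~\ref{gamma>N} to $\bar z$ with $N:=\dim(\hat F_1+\hat F_2)$, obtain $z'\geq\bar z$ such that the number $N'$ of elements of $K_{z'}$ exceeds $\dim(\hat F_1+\hat F_2)$. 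By hypothesis~\ref{F-Kg-ndeg} and the non-degeneracy of the corresponding pair, $\Phi_{z'}$ restricts to an isomorphism of some $\hat H$ with $(\hat H,K_{z'})\in\La$ onto $\R^{N'}$, so it is surjective; choosing a complement $W$ of $\Phi_{z'}(\hat F_1+\hat F_2)$ in $\R^{N'}$ and setting $\hat F':=(\hat F_1+\hat F_2)+(\Phi_{z'}|_{\hat H})^{-1}(W)$ gives a subspace on which $\Phi_{z'}$ is an isomorphism onto $\R^{N'}$. Hence $\dim\hat F'=N'$, the matrix~\eqref{Gij} for a basis of $\hat F'$ is non-degenerate, and the constancy clause holds by Assumption~\ref{const}; so $(\hat F',K_{z'})$ is a non-degenerate pair, lies in $\La$, and --- since $\hat F'\supset\hat F_1,\hat F_2$ and $z'\geq z_1,z_2$ --- is an upper bound of $\la_1,\la_2$.

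The step I expect to be the main obstacle is the triviality of $\bigcap_{z\in Z}\ker\Phi_z$, equivalently that the momentum operators are separated by the configurational d.o.f. carried by the family $\{K_z\}_{z\in Z}$. Everything else --- the stabilisation, the linear-algebra enlargement of $\hat F_1+\hat F_2$, and the verification of the three clauses of non-degeneracy --- is routine once this is in place, and it is exactly here that Assumptions~\ref{k-Lambda}, \ref{comp-f} and~\ref{const} together with hypotheses~\ref{lin-comb-Kg} and~\ref{F-Kg-ndeg} all come into play.
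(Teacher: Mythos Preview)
The paper does not prove this proposition; it is quoted from \cite{lqg-tens} and used as a black box, so there is no proof in the paper to compare against. Your argument is correct and is the natural one: reflexivity and transitivity are automatic, and the upper-bound property is obtained by first making $\Phi_{z'}$ injective on $\hat F_1+\hat F_2$ for large enough $z'$, then enlarging $\hat F_1+\hat F_2$ by linear algebra to a space on which $\Phi_{z'}$ is bijective.

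One small point worth spelling out: in your justification of $\bigcap_{z}\ker\Phi_z=\{0\}$ you pass from ``every elementary configurational d.o.f.\ is compatible with some $K_z$'' to ``every cylindrical function is compatible with some $K_z$''. This is true but not entirely tautological. If $\Psi=\pr_{K_0}^*\psi$ is compatible with an arbitrary set $K_0=\{\kappa_1,\ldots,\kappa_n\}$ of independent d.o.f., Assumption~\ref{k-Lambda} gives a $K_z$ such that each $\kappa_i$ is compatible with $K_z$; then $\Psi=(\psi\circ\tilde K_{K_0}^{-1})\circ(\kappa_1,\ldots,\kappa_n)$ factors smoothly through $Q_{K_z}$, hence is compatible with $K_z$. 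With that one line added, the argument is complete.
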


Applying this general construction to DPG we choose as a set $(Z ,\geq)$ the directed set $(\tl{\mathbf{L}},\geq)$ of all sets of independent hoops and as a set $\mathbf{K}_Z$ the set $\{K_{\tl{L}}\}_{\tl{L}\in\tl{\mathbf{L}}}$ of all finite sets of configurational d.o.f. defined by these sets of hoops. Consequently, the resulting set $(\La,\geq)$ consists of all non-degenerate pairs $\{(\hat{F},K_{\tl{L}})\}_{\tl{L}\in\tl{\mathbf{L}}}$ and $(\hat{F}',K_{\tl{L}'})\geq(\hat{F},K_{\tl{L}})$ if
\begin{align}
\hat{F}'&\supset\hat{F},& \tl{L}'&\geq\tl{L}.
\label{La-geq}
\end{align}
To convince ourselves that $(\La,\geq)$ is directed it is enough to show that the proposition above is applicable to this case.

Let $L$ be a set of independent loops which defines a set $\tl{L}$ of independent hoops. It is possible to find a set $L_0$ of $N$ independent loops such that the loops in it have no common points with the loops in $L$ \footnote{Consider an analytic disc $D\subset \Sigma$ with boundary such that $(i)$ it has no common points with every $l\in L$ and $(ii)$ its boundary is an analytic submanifold of $\Sigma$. Choose $N$ pairwise distinct points on the boundary of $D$ and a point $y_0$ in the interior of the disc. Next connect each point on the boundary with $y_0$ by an edge contained in $D$. Then a part of the boundary between two consecutive points $y,y'$ and the edges connecting $y_0$ and $y,y'$ form a loop. In this way we obtain a set $L_0$ of $N$ independent loops.}. Then  $L\cup L_0$ is again a set of independent loops which defines a set $\tl{L}'$ of independent hoops. It is clear that  $\tl{L'}\geq \tl{L}$ and that $\tl{L}'$ contains more than $N$ elements. Thus the first assumption of the proposition is satisfied.                     

Regarding the second assumption: by virtue of Proposition \ref{L'L-lin} if $\tl{L}'\geq \tl{L}$ then each d.o.f.  in $K_{\tl{L}}$ is a linear combination of d.o.f. in $K_{\tl{L}'}$ and thereby a cylindrical function compatible with the latter set.   
 
Let $\{l_1,\ldots,l_N\}$ be a set of independent loops and let $e_I$ be an edge which appears in a composition defining the loop $l_I$ and which is intersected by other edges in the composition and by other loops  at most at a finite number of points. It is possible to find a face $S_I$ such that $S_I\cap l_J$ is empty if $I\neq J$ and consists of exactly one point belonging to the interior of $e_I$ if $I=J$. The orientation of $S_I$ can be chosen in such a way that $\hat{\varphi}_{S_I}\kappa_{\tll_J}=\delta_{IJ}$. Then $\hat{F}:={\rm span}\{\hat{\varphi}_{S_1},\ldots,\hat{\varphi}_{S_N}\}$ and $K_{\tl{L}}$, where $\tl{L}=\{\tll_1,\ldots,\tll_N\}$ form a non-degenerate pair being an element of $\La$. This shows that the third assumption is met.

Due to this fact and Lemma \ref{K0-lin-KL} the set $(\La,\geq)$ satisfies Assumption \ref{k-Lambda}. It also meets Assumptions \ref{comp-f} and \ref{const} by virtue of, respectively, Equation \eqref{hphi-Psi} and Equation \eqref{hphiS-kl}.

Thus Proposition \ref{Lambda-dir} guarantees that the set $(\La,\geq)$ built (according to the general construction presented just before the proposition) from the sets $(\tl{\mathbf{L}},\geq)$ and $\{K_{\tl{L}}\}_{\tl{L}\in\tl{\mathbf{L}}}$ is directed.   

%***************************************************
\subsection{Checking Assumptions}
%***************************************************

Since  we have already the directed set $(\La,\geq)$ of (constrained) finite physical systems the only task to be done is to check whether it is constructed {\em properly}, that is, whether a family of spaces of quantum states built over the set  can be equipped with a structure of a projective family. To fulfill the task it is enough to check whether the set $(\La,\geq)$ satisfies all Assumptions listed in  Section \ref{outline}.   

We know already that Assumptions \ref{k-Lambda}, \ref{comp-f} and \ref{const} are met by the set. 

Regarding Assumption \ref{f-Lambda}, consider a set $F_0=\{\varphi_{S_1},\ldots,\varphi_{S_N}\}$ of elementary momentum d.o.f.. Let us fix a face $S_I$ and recall that by virtue of the definition of a face and Equation \eqref{eps} there exists a non-trivial hoop $\tll$ such that $\eps(S_I,\tll)\neq 0$ which means that $\hat{\varphi}_{S_I}\kappa_{\tll}\neq 0$. Due to Lemma \ref{h-h} there exists a set $\tl{L}=\{\tll_1,\ldots,\tll_n\}$ of independent hoops such that $\tll$ is a composition of the hoops in $\tl{L}$ and their inverses. Assume that the hoops are ordered in such a way that $\hat{\varphi}_{S_I}\kappa_{\tll_1}\neq 0$ (if $\hat{\varphi}_{S_I}\kappa_{\tll_i}=0$ for every $i$ then $\hat{\varphi}_{S_I}\kappa_{\tll}= 0$). As shown in Section \ref{dir-set} there exists a set $\{S'_1,\ldots,S'_n\}$ of faces such that $\hat{\varphi}_{S'_i}\kappa_{\tll_j}=\delta_{ij}$. Let 
\[
\hat{F}_I={\rm span}\{\hat{\varphi}_{S_I},\hat{\varphi}_{S'_2},\ldots,\hat{\varphi}_{S'_n}\}.
\]
Then $\la_I:=(\hat{F}_I,K_{\tl{L}})$ is non-degenerate and thereby an element of $\La$. Since $\La$ is directed there exists $\la\in\La$ such that $\la\geq\la_I$ for all $I\in\{1,\ldots,N\}$. Taking into account the definition \eqref{La-geq} of the directing relation on $\La$ we see that all operators $\{\hat{\varphi}_{S_1},\ldots,\hat{\varphi}_{S_N}\}$ belong to a space $\hat{F}\in\la$. Thus the set $(\La,\geq)$ satisfies Assumption \ref{f-Lambda}.              
                 
Assumption \ref{RN} is satisfied by virtue of Lemma \ref{K-bij}, and Assumption \ref{non-deg} by virtue of the construction of the set $(\La,\geq)$ (see the previous section). 

To show that Assumption \ref{Q'=Q} is met we will use 

\begin{pro}
{\rm \cite{q-nonl}} Let $K,K'$ be sets of independent d.o.f. of $N$ and $N'$ elements respectively such that $Q_K=Q_{K'}$. Suppose that there exists a set $K''$ of independent d.o.f. of $N''$ elements such that the image of $\tilde{{K}}''$ is $\R^{{N}''}$ and  each d.o.f. in $K\cup K'$ is a linear combination of d.o.f. in ${K}''$. Then each d.o.f. in $K$ is a linear combination of d.o.f. in $K'$.  
\label{KK'K''}
\end{pro}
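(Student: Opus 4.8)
The plan is to transport everything into the coordinate frame $\tilde K''$ and reduce the statement to linear algebra in $\R^{N''}$. First I would use the hypothesis that every d.o.f. in $K=\{\kappa_1,\dots,\kappa_N\}$ and in $K'=\{\kappa'_1,\dots,\kappa'_{N'}\}$ is a (real) linear combination of the d.o.f. in $K''=\{\kappa''_1,\dots,\kappa''_{N''}\}$ to write
\[
\kappa_I=\sum_{a=1}^{N''}M_{Ia}\,\kappa''_a,\qquad
\kappa'_J=\sum_{a=1}^{N''}M'_{Ja}\,\kappa''_a
\]
for real matrices $M$ (of size $N\times N''$) and $M'$ (of size $N'\times N''$). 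Let $\Phi\colon Q\to\R^{N''}$ be the map $\Phi(q):=\big(\kappa''_1(q),\dots,\kappa''_{N''}(q)\big)$. Since the partition defining $Q_{K''}$ groups together exactly those points of $Q$ on which all the $\kappa''_a$ agree, the map $\tilde K''$ is automatically injective; together with the hypothesis that its image is all of $\R^{N''}$, this shows that $\Phi$ is surjective. In particular, every vector of $\R^{N''}$ is of the form $\Phi(q)-\Phi(q')$ for a suitable pair $q,q'\in Q$.

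Next I would rewrite the two level-set partitions in these terms. Two points $q,q'\in Q$ lie in the same level set of $K$ precisely when $\kappa_I(q)=\kappa_I(q')$ for all $I$, that is, when $M\big(\Phi(q)-\Phi(q')\big)=0$; the analogous statement holds for $K'$ with $M'$ in place of $M$. The hypothesis $Q_K=Q_{K'}$ says exactly that these two partitions of $Q$ --- equivalently, the two induced equivalence relations on $Q$ --- are the same. Combining this with the surjectivity of $\Phi$, one concludes that for every $v\in\R^{N''}$ one has $Mv=0$ if and only if $M'v=0$; in other words $\ker M=\ker M'$.

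Finally, taking orthogonal complements in $\R^{N''}$ turns $\ker M=\ker M'$ into the equality of the row spaces of $M$ and $M'$. Hence each row $M_I$ of $M$ is a real linear combination $M_I=\sum_J L_{IJ}M'_J$ of the rows of $M'$, and substituting this back,
\[
\kappa_I=\sum_a M_{Ia}\,\kappa''_a
=\sum_J L_{IJ}\Big(\sum_a M'_{Ja}\,\kappa''_a\Big)
=\sum_J L_{IJ}\,\kappa'_J ,
\]
which is the desired conclusion.

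The one genuinely delicate step is the middle paragraph: one must make sure that ``$Q_K=Q_{K'}$'' is being used as the assertion that the two partitions of $Q$ --- hence the two equivalence relations on $Q$ --- literally coincide, and that it is precisely surjectivity of $\Phi$ which allows the comparison of $\ker M$ with $\ker M'$ to be carried out on \emph{all} of $\R^{N''}$ rather than only on those differences $\Phi(q)-\Phi(q')$ that are a priori available. Everything else --- writing the d.o.f. over $K''$, passing from equal kernels to equal row spaces, and the final back-substitution --- is routine. (The argument is, moreover, unaffected if one permits the linear combinations over $K''$ to carry additive constants, since those constants cancel out of the conditions $\kappa_I(q)=\kappa_I(q')$.)
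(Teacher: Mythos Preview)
The paper does not actually prove this proposition: it is quoted from \cite{q-nonl} and used without proof in the verification of Assumption~\ref{Q'=Q}. So there is no in-paper argument to compare yours against.

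That said, your argument is correct and is the natural one. The only steps worth scrutinizing are exactly the ones you flag: (i) that $Q_K=Q_{K'}$ is an equality of partitions of $Q$ (which it is, since by definition $Q_K$ is the set of common level sets of the $\kappa_I$, viewed as subsets of $Q$), and (ii) that surjectivity of $\tilde K''$ onto $\R^{N''}$ lets you test the kernel condition on \emph{all} of $\R^{N''}$. For (ii) your observation that every $v\in\R^{N''}$ arises as $\Phi(q)-\Phi(q')$ is fine --- since $\Phi$ is onto, pick $q'$ with $\Phi(q')=0$ and $q$ with $\Phi(q)=v$. From $\ker M=\ker M'$ the equality of row spaces and the back-substitution are standard linear algebra. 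Your parenthetical remark about additive constants is also correct but superfluous here, since the paper's notion of ``linear combination'' is genuinely linear (cf.\ Equations~\eqref{kkk} and the use of Proposition~\ref{L'L-lin}).
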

\noindent Assume then that $(\hat{F},K_{\tl{L}'}),(\hat{F},K_{\tl{L}})\in\Lambda$ and $Q_{K_{\tl{L}'}}=Q_{K_{\tl{L}}}$. There exists a set $\tl{L}''$ of independent hoops such that $\tl{L}''\geq\tl{L}',\tl{L}$. By virtue of Lemma \ref{K-bij} the image of $\tl{K}_{\tl{L}''}$ is $\R^{N''}$ with $N''$ being the number of elements of $K_{\tl{L}''}$. Proposition \ref{L'L-lin} implies that every d.o.f. in  $K_{\tl{L}}\cup K_{\tl{L}'}$ is a linear combination of d.o.f. in $K_{\tl{L}''}$. Therefore Proposition \ref{KK'K''} allows us to conclude that each d.o.f. in $K_{\tl{L}}$ is a linear combination of d.o.f. in $K_{\tl{L}'}$ which means (Proposition \ref{L'L-lin} again) that $\tl{L}'\geq \tl{L}$. Thus $(\hat{F},K_{\tl{L}'})\geq(\hat{F},K_{\tl{L}})$ and Assumption \ref{Q'=Q} is satisfied. 

Let us finally note that the definition \eqref{La-geq} of the directing relation $\geq$ on $\La$  and Proposition \ref{L'L-lin} imply straightforwardly that the set $(\La,\geq)$ meets Assumptions \ref{lin-comb} and \ref{FF'}. 

We conclude that the set $(\La,\geq)$ satisfies all Assumptions and (as stated in Section \ref{outline}) this fact guarantees what follows. Let $\la=(\hat{F},K_{\tl{L}})\in\La$ and let $d\mu_\la$ be a Lebesgue measure on $\tl{Q}_{K_{\tl{L}}}$ given by the coordinate frame \eqref{K-fr}. Define a Hilbert space   
\[
\h_{\la}:=L^2(\tl{Q}_{K_{\tl{L}}},d\mu_\la)
\]
and assign to $\la$ the set $\mathfrak{S}_\la$ of all states on the $C^*$-algebra $\B_\la$ of all bounded operators on $\h_\la$. Then the set $\{\mathfrak{S}_\la\}_{\la\in\La}$ possesses a structure of a projective family. 

Elements of the projective limit $\mathfrak{S}$ of the family $\{\mathfrak{S}_\la\}_{\la\in\La}$ are  {\em constrained projective quantum states for DPG} where the adjective ``constrained'' refers to the fact that the states were constructed over the set $P_{\rm G}\times \tl{Q}$ of orbits of the gauge transformations given by the Gauss constraint $C_{\rm G}$ which pass through solutions of the constraint.

%***************************************************
\section{Discussion \label{disc}}
%***************************************************

Let us recall that the goal of the present research is to contribute to establishing a general prescription for constructing constrained projective quantum states for field theories. Taking into account the three steps of the scheme of constructing unconstrained states  outlined in Section \ref{intro} one can imagine the following  sorts of procedures for constructing constrained ones:
\begin{enumerate}
\item one solves (classical) constraints on a phase space of a theory and builds constrained projective quantum states for the theory over a set of these solutions quotiented by gauge transformations;  
\item one chooses a set $\bar{\La}$ of unconstrained finite physical systems and on the basis of the constraints on the phase space one transform each system into a constrained one and ``quantizes'' it; thus one obtains constrained quantum states for all finite systems and then combines the states into projective ones for the theory \cite{proj-lt-I}; 
\item one chooses a set $\bar{\La}$ of unconstrained finite physical systems, ``quantizes'' each system and on the basis of the constraints of the theory obtains constrained quantum states for each system, then one uses all such states to build constrained projective states for the theory; 
\item one constructs a space $\bar{\mathfrak{S}}$ of unconstrained projective quantum states and solves quantum constraints defined on this space. 
\end{enumerate}

Clearly, in this paper we applied a procedure of the sort 1. It has to be emphasized that this was possible only because the Gauss constraint of DPG is very simple which allowed to find easily gauge invariant elementary d.o.f. and to describe explicitly all orbits of the gauge transformations generated by the constraint. For obvious reasons a procedure of this sort cannot be applied to many physically important theories. On the other hand procedures of the sort 4 do not seem to be workable because spaces of unconstrained projective quantum states are usually very complex. It seems then that the only promising procedures are those of the sort 2 and 3.

Since the unconstrained and constrained projective quantum states for DPG were constructed independently we can try to relate finite physical systems used in both constructions or to relate quantum states associated with the finite systems in both cases in order to get some hints which may be helpful in formulating a procedure of the sort 2 or 3. It turns out that it is easier and more natural to find a relation between the finite systems. Before that let us describe briefly the construction of the unconstrained states for DPG. 

%***************************************************
\subsection{Unconstrained projective quantum states for DPG}
%***************************************************

Unconstrained projective quantum states for DPG were constructed in \cite{q-nonl} according to the prescription which was used in the present paper to construct the constrained quantum states. The former construction is also background independent (it also uses some LQG techniques like graphs \cite{al-hoop,baez,top-lewand}) and differs from the latter one in details listed below. Before we will describe the differences let us emphasize that many symbols which will refer to the construction of the unconstrained states will be equipped with a bar at the top of them unless a symbol cannot be confused with an analogous one used in the construction of the constrained states. 

The point of departure for the construction of unconstrained states was the phase space $P\times Q$ of the theory. Configurational elementary d.o.f. were defined as integrals of $A$ along edges, integrals of $\sigma$ over faces were chosen to be momentum configurational d.o.f., and faces were defined as in the present paper except the last requirement in the definition of a face introduced in Section \ref{sec-lf}. Configurational d.o.f. were grouped into finite sets given by graphs in $\Sigma$: a graph is a finite set $\gamma=\{e_1,\ldots,e_N\}$ of edges\footnote{Unlike in the present paper in \cite{q-nonl} the source and the target of an edge are allowed to coincide. However, each such an edge is a composition of two edges with two-point boundaries. Here we exclude from the construction of unconstrained projective quantum states those edges which do not possess two-point boundaries---the exclusion results in no essential change of the construction and will make simpler further considerations.}  such that for $I\neq J$  an intersection $e_I\cap e_J$ is either empty or consists of points which belong to boundaries of the edges. Here  a set $\{\kappa_{e_1},\ldots,\kappa_{e_N}\}$ of d.o.f. given by edges of a graph $\gamma=\{e_1,\ldots,e_N\}$ will be denoted by ${K}_\gamma$.  

A momentum operator $\hat{\bar{\varphi}}_{\bar{S}}$ defined by a face $\bar{S}$ on a complex linear space $\overline{\Cyl}$ spanned by cylindrical functions was defined via the same regularization \cite{acz} and reads
\begin{equation}
\hat{\bar{\varphi}}_{\bar{S}}\bar{\Psi}:=\sum_{I=1}^N\eps(\bar{S},e_I)\, \pr_{{K}_\gamma}^*\big(\partial_{\bar{x}_I}\bar{\psi}\big),
\label{u_hphi_S}
\end{equation}
where all symbols have analogous meaning to those appearing in \eqref{hphi_S}, and a number $\eps(\bar{S},e)$ can be calculated as follows. Express the edge $e$ as a composition of edges adapted to $\bar{S}$ and let   
\begin{enumerate}
\item $n$ be a number of transversal edges in the composition which either $(i)$ intersect $\bar{S}$ only at their sources and are placed 'below' the face  or $(ii)$ intersect $\bar{S}$ only at their targets and are placed 'above' the face;
\item $m$ be a number of transversal edges in the composition which either $(i)$ intersect $\bar{S}$ only at their sources and are placed 'above' the face  or $(ii)$ intersect $\bar{S}$ only at their targets and are placed 'below' the face.
\end{enumerate}
Then $\eps(\bar{S},e)=n-m$. 

The set $\Gamma$ of graphs was equipped with the standard  directing relation applied in LQG ($\gamma'\geq\gamma$ if every edge of $\gamma$ is a composition of edges of $\gamma'$ and their inverses) and used to obtain a set $(\bar{\La},\geq)$ of finite physical systems. More precisely, if $\hat{\bar{F}}$ denotes a finite dimensional linear space of momentum operators then $(i)$ $\bar{\La}$ is the set of all non-degenerate pairs $\{(\hat{\bar{F}},{K}_\gamma)\}$ and $(ii)$  $(\hat{\bar{F}}',{K}_{\gamma'})\geq(\hat{\bar{F}},{K}_\gamma)$  if $\hat{\bar{F}}'\supset \hat{\bar{F}}$ and $\gamma'\geq\gamma$. The set $(\bar{\La},\geq)$ satisfies all Assumptions listed in Section \ref{outline} which means that it generates a space $\bar{\mathfrak{S}}$ of projective quantum states.  

%***************************************************
\subsection{The unconstrained quantum states for DPG versus the constrained ones}
%***************************************************

In the present section we will show how to obtain the constrained finite systems for DPG being elements of the set $\La$ from the unconstrained ones constituting the set $\bar{\La}$. 

%***************************************************
\subsubsection{Elementary classical variables of a finite system \label{ecv}}
%***************************************************

Consider a finite system $\bar{\la}=(\hat{\bar{F}},{K}_\gamma)\in\bar{\La}$ and a set $\overline{\Cyl}_{K_\gamma}$ of cylindrical functions compatible with the set $K_\gamma$ (see Equation \eqref{cyl-f}). Let us recall that in canonical quantization a term {\em elementary classical variable} means a function on a phase space which is unambiguously represented by an operator on a Hilbert space. A crucial observation is that (some) functions in $\overline{\Cyl}_{K_\gamma}$ and operators in $\hat{\bar{F}}$ can be treated as elementary classical variables of the system $\bar{\la}$. This statement can be justified as follows.

Let $\gamma=\{e_1,\ldots,e_N\}$. Then the  image of the map $\tilde{K}_\gamma$ (Equation \eqref{K-fr}) is $\R^N$ \cite{q-nonl}. Thus the reduced configuration space $Q_{K_\gamma}$ is an $N$-dimensional linear space and the natural coordinate frame $(\bar{x}_1,\ldots,\bar{x}_N)$ given by $\tilde{K}_\gamma$ defines on it a Lebesgue measure $d\mu_{\bar{\la}}$. This measure yields a Hilbert space   
\[
\h_{\bar{\la}}:=L^2(Q_{K_\gamma},d\mu_{\bar{\la}})
\]   
assigned to the system $\bar{\la}$. 

Note that due to Equation \eqref{cyl-f} there is one-to-one correspondence between elements of $\overline{\Cyl}_{K_\gamma}$ and elements of $C^\infty(Q_{K_\gamma},\C)$ (i.e. smooth complex functions on $Q_{K_\gamma}$) and (some\footnote{These functions can be specified as e.g. multipliers of Schwarz functions on $Q_{K_\gamma}$: $\psi\in C^\infty(Q_{K_\gamma},\C)$ is such a multiplier if for every Schwarz function $\psi'$ on $Q_{K_\gamma}$ $\psi\psi'$ is still a Schwarz function.}) functions in $C^\infty(Q_{K_\gamma},\C)$ define by multiplication (unbounded and bounded) operators on the Hilbert space. 

On the other hand since $\bar{\la}$ satisfies Assumptions \ref{comp-f} and \ref{const} \cite{q-nonl} every $\hat{\bar{\varphi}}\in\hat{\bar{F}}$ defines on $Q_{K_\gamma}$ a constant vector field
\begin{equation}
\sum_{I=1}^N(\hat{\bar{\varphi}}\kappa_{e_I})\partial_{\bar{x}_I}
\label{vec}
\end{equation}
and $\hat{\bar{\varphi}}$ is naturally represented on $\h_{\bar{\la}}$ by an (unbounded quantum) momentum operator defined as $-i$ times \eqref{vec}. 

Thus we can treat functions in $\overline{\Cyl}_{K_\gamma}$ and operators in $\hat{\bar{F}}$ as elementary classical variables of the system $(\hat{\bar{F}},K_\gamma)$. Similarly, given constrained system $\la=(\hat{F},K_{\tl{L}})$ functions in ${\Cyl}_{K_{\tl{L}}}$ and operators in $\hat{{F}}$ can be regarded as elementary classical variables of $\la$.    

We will show below that restricting elementary classical variables of an unconstrained finite system to gauge invariant ones one obtains  elementary classical variables of some constrained systems in $\La$ and thus the systems themselves.

%***************************************************
\subsubsection{Gauge transformations on unconstrained finite systems}
%***************************************************

Let $f$ be a function on $\Sigma$. We will use a symbol $g_f$ to denote all sorts of transformations induced by a gauge transformation $A\mapsto A+df\equiv g_f(A)$ generated by the Gauss constraint.  

If ${\kappa}_e$ is a configurational elementary d.o.f. given by an edge $e$ then
\begin{equation}
(g^*_f {\kappa}_e)(A):={\kappa}_e(g_f(A))={\kappa}_e(A+df)={\kappa}_e(A)+{\kappa}_e(df).
\label{gf-kappa}
\end{equation}
Since $f$ is fixed (i.e. independent of $A$) it follows from the result above that level sets of ${\kappa}_e$ coincide with those of $g^*_f{\kappa}_e$. This means that the Gauss constraint generates gauge transformations on every reduced configuration space $Q_{{K}_\gamma}$:
\begin{equation}
Q_{{K}_\gamma}\ni[A]\mapsto g_f[A]:=[g_f(A)]=[A+df]=[A]+[df]\in Q_{{K}_\gamma},
\label{gf-QK}
\end{equation}
where in the last step we used the natural linear structure on $Q_{{K}_\gamma}$. 

The gauge transformations given by the Gauss constraint preserves the momentum variable $\sigma$, hence every d.o.f. $\bar{\varphi}_{\bar{S}}$ is gauge invariant. But it is not obvious that every momentum $\hat{\bar{\varphi}}_{\bar{S}}$ is gauge invariant since it is defined by means of the regularization which may spoil the invariance. It turns out however (see Appendix \ref{g-inv}) that for every $\bar{\Psi}\in\overline{\Cyl}$  
\[
g^{-1*}_f\big(\hat{\bar{\varphi}}_{\bar{S}} (g^*_f\bar{\Psi})\big)=\hat{\bar{\varphi}}_{\bar{S}}\bar{\Psi}.
\]     

%***************************************************
\subsubsection{Gauge invariant elementary classical variables}
%***************************************************

Let $\bar{\la}=(\hat{\bar{F}},K_\gamma)\in\bar{\La}$ be the system considered in Section \ref{ecv}. If $\bar{\Psi}\in \overline{\Cyl}_{K_\gamma}$ then $\bar{\Psi}=\pr^*_{K_\gamma}\bar{\psi}$ for some $\bar{\psi}\in C^\infty(Q_{K_\gamma},\C)$. Suppose now that $\bar{\Psi}$ is gauge invariant---this means that for every $A\in Q$ and for every function $f$ on $\Sigma$ $\bar{\Psi}(A)=\bar{\Psi}(A+df)$. % or equivalently $\bar{\psi}([A])=\bar{\psi}([A]+[df])$ (see the formula \eqref{gf-QK}).
Using the map $\tilde{K}_\gamma$ we can rewrite this condition as follows:
\begin{equation}
(\bar{\psi}\circ\tl{K}^{-1}_\gamma)\big(\kappa_{e_1}(A),\ldots,\kappa_{e_N}(A)\big)=(\bar{\psi}\circ\tl{K}^{-1}_\gamma)\big(\kappa_{e_1}(A+df),\ldots,\kappa_{e_N}(A+df)\big).
\label{pK-kk}
\end{equation}
Since $\bar{\psi}$ is constant along orbits of the gauge transformations \eqref{gf-QK} it should be possible to express the function $\bar{\psi}\circ\tl{K}^{-1}_\gamma$ as one depending on a lower number of variables which are also constant on the orbits. This can be achieved by means of a prescription \cite{freidel} which once applied to our case uses the gauge transformations to reduce to zero as many variables at the r.h.s. of  \eqref{pK-kk} as possible. 

The prescription gives the following results (see Appendix \ref{gauge-inv} for details): if it is impossible to compose a non-trivial loop from edges of $\gamma$ and their inverses then each gauge invariant cylindrical function compatible with $K_\gamma$ is a constant function on $Q$. Otherwise there exists a set $\{l_1,\ldots,l_M\}$ ($M<N$) of independent loops composed from edges of $\gamma$  and a permutation $\pi$ of an $N$-element sequence such that for every $A\in Q$      
\begin{equation}
(\bar{\psi}\circ\tl{K}^{-1}_\gamma)\big(\kappa_{e_1}(A),\ldots,\kappa_{e_N}(A)\big)=(\bar{\psi}\circ\tl{K}^{-1}_\gamma)\Big(\pi\big(\kappa_{l_1}(A),\ldots,\kappa_{l_M}(A),0,\ldots,0\big)\Big).
\label{pK-kk-l}
\end{equation}
Thus the set of all gauge invariant functions in $\overline{\Cyl}_{K_\gamma}$ is either trivial (i.e. consists of all constant functions) or coincides with a set $\Cyl_{K_{\tl{L}}}$ of cylindrical functions compatible with a set $K_{\tl{L}}$ of configurational elementary d.o.f. on $\tl{Q}$, where the set $\tl{L}$ of independent hoops is defined by the loops $\{l_1,\ldots,l_M\}$.  

We know already that every momentum operator $\hat{\bar{\varphi}}_{\bar{S}}$ is gauge invariant. Consequently, every element of $\hat{\bar{F}}$ is gauge invariant as well. Moreover, it is straightforwardly to check that an operator $\hat{\bar{\varphi}}_{\bar{S}}$ given by \eqref{u_hphi_S} once restricted to gauge invariant functions constituting the space  $\Cyl$ coincides with an operator $\hat{\varphi}_{\bar{S}}$ given by \eqref{hphi_S}.     

It is then tempting to recognize gauge invariant functions in $\overline{\Cyl}_{K_\gamma}$ (provided they are non-trivial) and operators in $\hat{\bar{F}}$ restricted to $\Cyl$ as elementary classical variable of a constrained finite physical system  and the pair $(\hat{\bar{F}},K_{\tl{L}})$ as the constrained system itself. However such a conclusion would be premature, because the pair is not non-degenerate.

To see this define 
\[
\hat{\bar{F}}_0:=\{\ \hat{\bar{\varphi}}\in \hat{\bar{F}}\ | \ \hat{\bar{\varphi}}\kappa_{\tll}=0\ \text{for all $\tll\in\tl{L}$} \ \}.
\]
Obviously, $\hat{\bar{F}}_0$ is linear subspace of $\hat{\bar{F}}$. Since each loop $l_I$ introduced above is a composition of edges of $\gamma$ and their inverses each function $[A]\mapsto \kappa_{\tll_I}(A)$ is a linear function on $Q_{K_\gamma}$. Moreover, such functions given by all the loops $\{l_1,\ldots,l_M\}$ are linearly independent (because the loops are independent). On the other hand we know (see Section \ref{ecv}) that every momentum operator $\hat{\bar{\varphi}}\in\hat{\bar{F}}$ defines a constant vector field on $Q_{K_\gamma}$. This fact together with the non-degeneracy of $(\hat{\bar{F}},K_\gamma)$ (see Section \eqref{dir-set}) provides us with a linear bijection between $\hat{\bar{F}}$ and a linear space of all constant vector fields on $Q_{K_\gamma}$. Thus elements of $\hat{\bar{F}}_0$ correspond to constant vector fields on the $N$-dimensional linear space $Q_{K_\gamma}$ which annihilate $M$ linearly independent linear functions on the space. This means that $\hat{\bar{F}}_0$ is an  $(N-M)$-dimensional linear subspace of $\hat{\bar{F}}$, where $N-M>0$.   

It is clear now that in order to obtain a constrained finite system from $(\hat{\bar{F}},K_\gamma)$ we should choose a linear subspace $\hat{\bar{F}}_1\subset \hat{\bar{F}}$ such that
\[
\hat{\bar{F}}=\hat{\bar{F}}_1\oplus\hat{\bar{F}}_0,
\]
and restrict operators in $\hat{\bar{F}}_1$ to $\Cyl$. If $\hat{{F}}_1$ denotes the resulting space of restricted operators then $(\hat{{F}}_1,K_{\tl{L}})$ is non-degenerate and therefore it is a constrained physical system, that is, an element of $\La$. Note however, that $\hat{\bar{F}}_1$ is not unique---in fact there are infinitely many such subspaces of $\hat{\bar{F}}$ and each of them defines together with $K_{\tl{L}}$ a distinct element of $\La$. 

Thus if there are non-trivial gauge invariant functions in $\overline{\Cyl}_{K_\gamma}$ then the unconstrained finite system $\bar{\la}=(\hat{\bar{F}},K_\gamma)\in\bar{\La}$ defines infinitely many distinct constrained finite system being elements of $\La$. However, we do not claim that every constrained finite system in $\La$ can be obtained from an unconstrained one in $\bar{\La}$ (we will argue below this issue is not very relevant).

Let us emphasize that if an unconstrained system $\bar{\la}$ is a subsystem of an unconstrained one $\bar{\la}'$, that is, $\bar{\la}'\geq \bar{\la}$, and a constrained system $\la'$ ($\la$) is obtained from $\bar{\la}'$ ($\bar{\la}$) then it does not have to mean that $\la'\geq\la$. Indeed, suppose that edges $\{e_1,e_2,e_3\}$ constituting a graph $\gamma'$ have a common target and a common source and that faces $\{S_1,S_2,S_3\}$ are chosen in such a way that $\hat{\bar{\varphi}}_{S_I}\kappa_{e_J}=\delta_{IJ}$ \cite{q-nonl}. Then the graph $\gamma'$ and the faces  $\{S_1,S_2,S_3\}$ define in an obvious way a system $\bar{\la}'\in \bar{\La}$, and a graph $\gamma=\{e_1,e_2\}$ and the faces $\{S_1,S_2\}$ do a system $\bar{\la}\in\bar{\La}$. Obviously $\bar{\la}'\geq\bar{\la}$. Applying the procedure described above we can obtain from $\bar{\la}'$ a constrained system $\la'$ given by loops $\{e^{-1}_2\circ e_1, e^{-1}_3\circ e_1\}$  and faces $\{S_2,S_3\}$, and from $\bar{\la}$ a constrained system $\la$ generated by the loop $e^{-1}_2\circ e_1$ and the face $S_1$. Since $\hat{\varphi}_{S_1}$ do not belong to ${\rm span}\{\hat{\varphi}_{S_2},\hat{\varphi}_{S_3}\}$ $\la$ is not a subsystem of $\la'$. But of course if $\la$ was defined by the same loop and  $\hat{\varphi}_{S_2}$ then it would be a subsystem of $\la'$.                      

To summarize the results just obtained let us assume that we are trying to get constrained projective quantum states for DPG by the procedure described above, that is, by restricting elementary classical variables of each unconstrained finite system in $\bar{\La}$ to gauge invariant ones. Even if we did not obtain in this way the whole set $\La$ of constrained finite systems the procedure gives us an unambiguous answer to the question what a constrained finite system is and if some constrained systems besides those obtained directly from unconstrained ones were needed  they can be easily introduced. Similarly, the procedure does not transform the directing relation on $\bar{\La}$ into the directing relation on $\La$ but it provides us with a clear hint how to define the latter one: if $\gamma'$ is a graph underlying an unconstrained  system $\bar{\la}'$ and $\gamma$ a graph underlying its subsystem $\bar{\la}$ then edges of $\gamma$ are compositions of edges of $\gamma'$ and their inverses. Since the procedure suggests using loops or hoops instead of edges then the relation $\la'\geq\la$ between constrained system should involve composing hoops underlying the subsystem $\la$ from hoops (and their inverses) underlying the system $\la'$.    

Thus it seems that in general one should not expect that a procedure will transform a directed set of unconstrained finite physical system precisely into a directed set of constrained ones. It is rather sufficient to obtain a generic or typical constrained system and some (technical) hints how to define a directing relation between constrained systems (since the principle that this relation should be a relation system--subsystem may be too general to be applied in practice).

%***************************************************
\subsection{Procedure for constructing constrained projective quantum states presented in \cite{proj-lt-I}}
%***************************************************

As mentioned in Section \ref{intro} a procedure for constructing constrained projective quantum states was partially elaborated in \cite{proj-lt-I} and applied successfully to a toy-model in \cite{proj-lt-III}. According to our classification introduced at the very beginning of Section \ref{disc} this procedure is of the sort 2 since it starts by defining and solving constraints on each unconstrained finite system before it is ``quantized''. Here we are going to compare this procedure with that formulated in the previous section. 

The first difference between the procedures is a different way of defining finite physical system: in \cite{proj-lt-I} a finite physical system is identified with a finite dimensional phase space, while here such a system is a pair which consists of a finite dimensional configuration space and a finite dimensional space of linear operators which act on functions defined on the configuration space. However, this difference is rather of a technical character and we will not pay much attention to it, since what is really relevant to us are details of a passage from unconstrained systems to constrained ones.    

In \cite{proj-lt-I} it is assumed that constraints on a phase space of a field theory induce (in a strict or in an approximate way) constraints on an unconstrained finite system $\bar{\la}$---these constraints on $\bar{\la}$  will be called here {\em reduced constraints}. Let $\bar{\la}_0\subset\bar{\la}$ be the set of all solutions of the reduced constraints. Those reduced constraints which are of the first class define gauge transformations on $\bar{\la}$. The set of all orbits of these transformations contained in $\bar{\la}_0$ equipped with a symplectic form induced by that on $\bar{\la}$ is recognized as a constrained finite system $\la$. This implies that the procedure associates with every unconstrained system precisely one constrained one. Moreover, the procedure requires a constrained system $\la$ to be a subsystem of a constrained system $\la'$ if $\la$ and $ \la'$ originate respectively from unconstrained systems $\bar{\la}$ and $\bar{\la}'$   such that $\bar{\la}$ is a subsystem of $\bar{\la}'$.

Clearly, the procedure elaborated here in the case of DPG differs significantly from that introduced in \cite{proj-lt-I}: the former procedure does not use reduced constraints at all, it associates with an unconstrained system many constrained ones and a constrained system $\la$ does not have to be a subsystem of a constrained one $\la'$ even if initial unconstrained systems are in such a relation. Moreover, we will show below that at least in a case of some unconstrained systems of DPG there is no basis for introducing any reduced constraints.    

To this end consider an unconstrained system $\bar{\la}$ of DPG generated by a graph $\gamma=\{e_1,e_2\}$ and faces $\{S_1,S_2\}$ such that $(i)$ $l=e_2\circ e_1$ is a loop, $(ii)$ boundaries $\{\partial S_1,\partial S_2\}$ are (smooth) circles, $(iii)$ an intersection $cl(S_1)\cap cl(S_2)$ is empty and $(iv)$ $\hat{\bar{\varphi}}_{S_I}\kappa_{e_J}=\delta_{IJ}$.          

Since the Gauss constraint $C_{\rm G}$ does not impose any restriction on the configuration variable $A$ it does not restrict values of $\kappa_{e_1}$ and $\kappa_{e_2}$.  

Regarding momentum d.o.f. $\bar{\varphi}_{S_1}$ and $\bar{\varphi}_{S_2}$: it is possible to find two one-forms $\alpha_1$ and $\alpha_2$ such that $d\alpha_1\neq 0\neq d\alpha_2$ and $\int_{\partial S_I}\alpha_J=\delta_{IJ}$. Let
\[
\sigma_{t_1t_2}:=t_1d\alpha_1+t_2d\alpha_2,
\]     
where $t_1,t_2$ are real numbers. Clearly, $\sigma_{t_1t_2}$ satisfies the Gauss constraint and
\[
\bar{\varphi}_{S_I}(\sigma_{t_1t_2})=t_I,
\] 
which means that $C_{\rm G}$  does not impose any restriction on pairs consisting of values of $\bar{\varphi}_{S_1}$ and $\bar{\varphi}_{S_2}$. On the other hand it is easy to see that the momentum operators $\hat{\bar{\varphi}}_{S_1},\hat{\bar{\varphi}}_{S_2}$  remain linearly independent once restricted to the space $\Cyl$ spanned by gauge invariant cylindrical functions. 

Thus we see that there are no constraints induced by the Gauss constraint on the system $\bar{\la}$. Despite this fact $C_{\rm G}$ defines gauge transformations on $\bar{\la}$ which allow us to obtain from $\bar{\la}$ many constrained finite systems like that given by the loop $l$ and the face $S_1$.       

The main conclusion which can be drawn from the considerations above is that constraints on a phase space of a field theory may induce gauge transformations on an unconstrained finite system in two ways: via gauge transformations defined on the phase space and via reduced constraints induced on the system i.e. in the former case the constraints first define on the phase space the transformations which induce then transformations on the system and in the latter case the constraints first define the reduced ones which define then transformations on the system. However, the transformations on the system obtained in these two ways do not have to coincide. 

Consequently, to be more general the procedure described in \cite{proj-lt-I} should take into account that except gauge transformations defined on an unconstrained system via reduced constraints there may exist distinct gauge transformations obtained in the other way. Moreover, these additional transformations may spoil the one-to-one correspondence between unconstrained and constrained systems and a simple correspondence between the relations system-subsystem among unconstrained and constrained systems.

%***************************************************
\section{Summary}
%***************************************************

In this paper we constructed constrained projective quantum states for the degenerate Pleba\'nski gravity which satisfy the Gauss constraint. To obtain these states we considered the set $P_{\rm G}\times\tl{Q}$ of all solutions of the constraint quotiented by the gauge transformations generated by the constraint. Then we used this set as the point of departure for the method of constructing projective quantum states introduced in \cite{q-nonl}.

Next we compared the constrained states obtained here with the unconstrained projective quantum states for DPG described in \cite{q-nonl}. The goal of this comparison was to formulate a procedure which would allow us to obtain constrained projective quantum states without refering to the space of classical solutions of the constraint modulo the gauge transformations. Such a procedure was formulated---it transforms unconstrained finite physical systems to constrained ones by restricting elementary classical variables of an unconstrained system to gauge invariant ones. 

Finally we showed that the fairly general procedure aimed at an analogous goal introduced in \cite{proj-lt-I} does not include the one elaborated in this paper. The reason for this is that the former procedure assumes that all gauge transformations on an unconstrained finite system can be generated by reduced constrains induced on the system by constraints on a phase space of a field theory and this assumption is not satisfied in the case of the unconstrained systems chosen in \cite{q-nonl} for DPG.    

Certainly, more examples of constrained projective quantum states should be constructed and analyzed in order to formulate a general procedure for constructing such states.  

\paragraph{Acknowledgments} I am very grateful to Prof. Jacek Jezierski and Wojciech Kami\'n\-ski for helpful hints.

%***************************************************
%***************************************************
\appendix
%***************************************************
%***************************************************

%***************************************************
\section{Proof of Lemma \ref{k-akak} \label{proof}} 
%***************************************************

Let $\tilde{L}=\{\tilde{l}_1,\ldots,\tilde{l}_N\}$. Assume that
\begin{equation}
\kappa_{\tilde{l}}=\sum_{I=1}^{N} a_I\kappa_{\tilde{l}_I},
\label{k-lin-kk}
\end{equation}
where  $\{a_1,\ldots,a_N\}$ are real numbers. Let $l$ be a loop in $\tll$, $\{{l}_1,\ldots,{l}_N\}$ be a set of independent loops defining the set $\tilde{L}$ and 
\begin{align}
l&=e_n\circ\ldots \circ e_1,\label{l-eee}\\
{l}_1&=e_{1m}\circ \ldots \circ e_{1i}\circ\ldots\circ e_{11},\nonumber
\end{align}
where $e_{1i}$ is that edge which is intersected by the remaining edges $\{e_{1j}\}_{j\neq i}$ and by the remaining loops $\{l_2,\ldots,l_N\}$ at most at a finite number of points.

Let us now construct a sequence of edges $(E_0,E_1,\ldots, E_n)$ according to the following prescription: $E_0=e_{1i}$ and  
\begin{enumerate}
\item if $E_{k-1}\cap e_k$ is empty then $E_k=E_{k-1}$,  
\item if $E_{k-1}\cap e_k$ is finite then $E_k$ is defined as a subedge of $E_{k-1}$ such that $E_k\cap e_k$ is empty,   
\item if $E_{k-1}\cap e_k$ is infinite then $E_k$ is defined as a subedge\footnote{This subedge exists because the edges $E_{k-1}$ and $e_k$ are piecewise analytic \cite{al-hoop}.} of $E_{k-1}$ such that $E_k\subset e_k$.
\end{enumerate} 
Note that $E_n$ is an edge such that for every edge $e_i$ in \eqref{l-eee} either $E_n\cap e_i$ is empty or $E_n\subset e_i$ and in the latter case either the orientation of $E_n$ coincides with that of $e_i$ or the orientations are opposite.

Now let us choose a one-form $A$ such that $(i)$ $\int_{E_n}A\neq 0$, $(ii)$ $\supp A\cap l\subset E_n$, $(iii)$  $\supp A\cap l_1\subset E_n$ and $(iv)$ for every remaining loop $\supp A\cap l_I$ is empty. Then
by virtue of \eqref{k-lin-kk}
\[
(j_+-j_-)\int_{E_n}A=a_1\int_{E_n}A,
\]
where $j_+$ is the number of factors\footnote{In this counting all factors/edges in \eqref{l-eee} are treated as distinct if only they appear at different positions in the composition even if some of them coincide as oriented submanifolds of $\Sigma$.} in \eqref{l-eee} which contain $E_n$ and are oriented according to the orientation of $E_n$ and $j_-$  is the number of factors in \eqref{l-eee} which contain $E_n$ and are oriented in the opposite way.

The equation above allows us to conclude that $a_1$ is an integer. Thus all $\{a_1,\ldots,a_N\}$ in \eqref{k-lin-kk} are integers.

Note now that by virtue of \eqref{kkk}
\[
\kappa_{\tilde{l}_1^{a_1}\circ\ldots \circ\tilde{l}_N^{a_N}}=\sum_{I=1}^{N} a_I\kappa_{\tilde{l}_I}.
\]
This equation and \eqref{k-lin-kk} imply that
\[
\tilde{l}=\tilde{l}_1^{a_1}\circ\ldots \circ\tilde{l}_N^{a_N},
\]
which means that $\tilde{l}$ is a composition of hoops in $\tilde{L}$ and their inverses. 

%***************************************************
\section{Momentum operators on $\Cyl$ are well defined \label{hphi-ok}}
%***************************************************

Consider a face $S$ and a set $\{l_0,l_1,\ldots,l_N\}$ of (pairwise distinct) loops such that each of them is a composition of edges adapted to $S$. Denote by ${\cal E}$ the set of all factors\footnote{Again, here every two factors/edges $e,e'$  are treated as distinct if the appear at different compositions or at different positions in the same composition even if $e=e'$ as oriented submanifolds of $\Sigma$.} appearing in these compositions and by ${\cal E}^T$ the set of all factors in $\cal E$ which are transversal to $S$. 

Assume that a hoop $\tilde{l}_0$ is a composition of hoops  $\{\tilde{l}_1,\ldots,\tilde{l}_N\}$ and their inverses. The latter requirement means that 
\begin{equation}
\kappa_{{l}_0}=\sum_{I=1}^N n_I \kappa_{{l}_I},
\label{k-nk}
\end{equation}
where $n_I$ are integers.

Let us define on ${\cal E}^T$ an equivalence relation: $e_i\sim e_j$ if $(i)$ $e_i\cap S=e_j\cap S$ and $(ii)$ if there exists an open neighborhood $U$ of the point $e_i\cap S=e_j\cap S$ such that $e_i\cap U=e_j\cap U$. We will call an equivalence class $[e_i]$ of $e_i$ a germ.           

Consider a germ $[e]$ such that the edges in the germ are placed 'above' $S$.  Then there exists an edge $E$ such that its target is $e\cap S$ and it is contained in every edge in $[e]$. There exists a one-form $A\in Q$ such that for every $e'\in{\cal E}$       
\[
\supp A\cap e'=
\begin{cases}
\supp A\cap E &\text{if $e'\in[e]$}\\
\varnothing &\text{otherwise}
\end{cases}
\]
and $\int_EA\neq 0$. Evaluating both sides of \eqref{k-nk} on $A$ we obtain
\begin{equation}
(t^+_0[e]-s^+_0[e])\int_EA=\sum_{I=1}^N n_I(t^+_I[e]-s^+_I[e])\int_EA,
\label{tstse}
\end{equation}
where $t^+_\alpha[e]$ is the number of edges in $[e]$ which appear in the decomposition of $l_\alpha$ and which intersect $S$ at their targets, similarly $s^+_\alpha[e]$ is the number of edges in $[e]$ which appear in the decomposition of $l_\alpha$ and which intersect $S$ at their sources. 

Now, adding Equations \eqref{tstse} obtained for all germs $\{[e]\}$ such that edges in the  germs are placed 'above' the face $S$ we obtain       
\[
t^+_0-s^+_0=\sum_{I=1}^N n_I(t^+_I-s^+_I),
\]
where $t^+_\alpha$ is the number of edges which appear in the decomposition of $l_\alpha$, intersect $S$ at their targets and are placed 'above' $S$ and $s^+_\alpha$ is the number of edges which appear in the decomposition of $l_\alpha$, intersect $S$ at their sources and are placed 'above' $S$. Obviously, analogous formula
\[
t^-_0-s^-_0=\sum_{I=1}^N n_I(t^-_I-s^-_I)
\]
holds for edges placed 'below' the face (hopefully, the meaning of symbols is obvious here). Thus
\begin{equation}
t^+_0-s^+_0-(t^-_0-s^-_0)=\sum_{I=1}^N n_I\big(t^+_I-s^+_I-(t^-_I-s^-_I)\big).
\label{tsts-fin}
\end{equation}

Let us show now that the number \eqref{eps} does not depend on the choice of the loop \eqref{l-ad-ee-1}. Assume that a set $\{l'_0,l'_1\}$ consists of two distinct loops of this sort i.e. the loops are compositions of edges adapted to $S$ and  $\tilde{l}'_0=\tilde{l}'_1=\tilde{l}_I$. Thus in this case Equation \eqref{k-nk} reads $\kappa_{l'_0}=\kappa_{l'_1}$ and we obtain from \eqref{tsts-fin}      
\[
t^+_0-s^+_0-(t^-_0-s^-_0)=t^+_1-s^+_1-(t^-_1-s^-_1),
\]
which proofs that the number \eqref{eps} is independent of the choice of a representative of $\tilde{l}_I$. 

Now we are going to prove that the r.h.s. of \eqref{hphi_S} does not depend on the choice of a set $\tilde{L}$ of independent hoops. Since the set $\tilde{\mathbf{L}}$ of all sets of independent hoops is directed it is enough to show this in the case of sets $\tilde{L}',\tilde{L}\in\tl{\mathbf{L}}$ such that $\tilde{L}'\geq\tl{L}$. 

If $\tilde{L}'=\{\tll'_1,\ldots,\tll'_{N'}\}\geq\tl{L}=\{\tll_1,\ldots,\tll_N\}$ then 
\begin{equation}
\tll_I=(\tll'_{1})^{n_{1I}}\circ \ldots \circ(\tll'_{N'})^{n_{N'I}},
\label{l-lnln}
\end{equation}
where $\{n_{I'I}\}$ are integers. It is possible to show \cite{q-nonl} that in this case there exists a linear projection $\pr_{K_{\tilde{L}}K_{\tilde{L}'}}$ from $\tl{Q}_{K_{\tilde{L}'}}$ onto $\tl{Q}_{K_{\tilde{L}}}$ such that
\begin{equation}
\pr_{K_{\tilde{L}}}=\pr_{K_{\tilde{L}}K_{\tilde{L}'}}\circ \pr_{K_{\tilde{L}'}}.
\label{prprpr}
\end{equation}
The projection expressed in  natural coordinate frames $(x'_{I'})$ and $(x_I)$ on the reduced configuration spaces reads  
\begin{equation}
(x'_{I'})\mapsto (x_I)=\Big(\sum_{I'=1}^{N'} n_{I'I}x'_{I'}\Big).
\label{x'-x}
\end{equation}

Suppose that $\Psi\in\Cyl$ is a cylindrical function compatible with $K_{\tilde{L}}$. Due to \eqref{prprpr} the function is compatible with $K_{\tilde{L}'}$:
\[
\Psi=\pr^*_{K_{\tilde{L}}}\psi=\pr^*_{K_{\tilde{L}'}}(\pr_{K_{\tilde{L}}K_{\tilde{L}'}}^*\psi)=\pr^*_{K_{\tilde{L}'}}\psi',
\]   
where
\[
\psi':=\pr_{K_{\tilde{L}}K_{\tilde{L}'}}^*\psi
\]
is a smooth complex function on $\tilde{Q}_{K_{\tl{L}'}}$. Using \eqref{x'-x} we obtain from the equation above
\[
\partial_{x'_{I'}}\psi'=\sum_{I=1}^N n_{I'I}\pr_{K_{\tilde{L}}K_{\tilde{L}'}}^*(\partial_{x_I}\psi).
\]
On the other hand it follows from \eqref{eps}, \eqref{tsts-fin} and \eqref{l-lnln} that
\[
\eps(S,\tll_{I})=\sum_{I'=1}^{N'} n_{I'I}\eps(S,\tll'_{I'}).
\]
Combining the last two equations and \eqref{prprpr} we get
\begin{multline*}
\sum_{I'=1}^{N'}\eps(S,\tll'_{I'})\pr^*_{K_{\tilde{L}'}}\big(\partial_{x'_{I'}}\psi'\big)=\pr^*_{K_{\tilde{L}'}}\circ \pr_{K_{\tilde{L}}K_{\tilde{L}'}}^*\Big(\sum_{I=1}^{N}\sum_{I'=1}^{N'} n_{I'I}\eps(S,\tll'_{I'})\partial_{x_{I}}\psi\Big)=\\=
\sum_{I=1}^{N}\eps(S,\tll_{I})\pr^*_{K_{\tilde{L}}}\big(\partial_{x_{I}}\psi\big)
\end{multline*}
which shows that the definition of momentum operators does not depend on the choice of a set $\tl{L}$ of independent hoops.

%***************************************************
\section{Momentum operators on $\overline{\Cyl}$ are gauge invariant \label{g-inv}}
%***************************************************

As proven in \cite{q-nonl} for every $\bar{\Psi}\in\overline{\Cyl}$ there exists a graph $\gamma$ such that $\bar{\Psi}$ is a cylindrical function compatible with ${K}_\gamma$ i.e. there exists a smooth function $\bar{\psi}$ on $Q_{{K}_\gamma}$ such that
\[
\bar{\Psi}=\pr_{{K}_\gamma}^*\bar{\psi}.
\]      
Let $\gamma=\{e_1,\ldots,e_N\}$ and let $(\bar{x}_1,\ldots,\bar{x}_N)$ be a natural coordinate frame given by the map ${\tl{K}}_\gamma$ (see \eqref{K-fr}). There exist one-forms $\{A_1,\ldots,A_N\}$ such that
\[
{\kappa}_{e_J}(A_I)=\delta_{IJ}
\]  
for every ${\kappa}_{e_J}\in{K}_\gamma$. By a direct calculation one can easily show that
\[
\frac{d}{dt}\Big|_{t=0}\bar{\Psi}(A+tA_I)=(\pr_{{K}_\gamma}^*\partial_{\bar{x}_I}\bar{\psi})(A).
\] 
Thus 
\[
(\hat{\bar{\varphi}}_{\bar{S}}\bar{\Psi})(A)=\sum_{I=1}^N\eps(\bar{S},e_I)\frac{d}{dt}\Big|_{t=0}\bar{\Psi}(A+tA_I).
\]
Consequently,
\begin{multline*}
\Big(g^{-1*}_f\big(\hat{\bar{\varphi}}_{\bar{S}}(g^*_f\bar{\Psi})\big)\Big)(A)=\big(\hat{\bar{\varphi}}_{\bar{S}}(g^*_f\bar{\Psi})\big)(A-df)=\sum_{I=1}^N\eps(\bar{S},e_I)\frac{d}{dt}\Big|_{t=0}(g^*_f\bar{\Psi})(A-df+tA_I)=\\=\sum_{I=1}^N\eps(\bar{S},e_I)\frac{d}{dt}\Big|_{t=0}\bar{\Psi}(A-df+tA_I+df)=(\hat{\bar{\varphi}}_{\bar{S}}\bar{\Psi})(A).
\end{multline*}

%***************************************************
\section{Proof of Equation \eqref{pK-kk-l} \label{gauge-inv}}
%***************************************************

The proof of Equation \eqref{pK-kk-l} is based on a method presented in \cite{freidel}.

In considerations below  it will be convenient to use a common name for points which belong to boundaries of edges of a graph---a point $y$ will be called a {\em vertex} of a graph if $y$ is a target or a source of at least one edge of the graph.    

Assume first  that the graph $\gamma=\{e_1,\ldots,e_N\}$ is connected i.e. that the set $e_1\cup\ldots\cup e_N$ is connected. Let us now construct a sequence of $(T_1,\ldots,T_V)$ of subgraphs of $\gamma$, where $(V+1)$ is the number of vertices of $\gamma$: 
\begin{enumerate}
\item $T_1:=\{e_1\}$;
\item for $k<V$ choose an edge $e\in\gamma\setminus T_k$ such that precisely one point of $e$ (either its source of its target) is a vertex\footnote{The graph $T_k$ possesses $(k+1)$ vertices and if $k<V$ then there exist at least one vertex of $\gamma$ which is not a vertex of $T_k$. The edge $e$ exists since otherwise $\gamma$ would not be connected.} of $T_k$ and define $T_{k+1}:=T_k\cup\{e\}$. 
\end{enumerate}
It is easy to see that for every $k\in\{1,\ldots,V\}$ $(i)$ $T_k$ is a connected graph and $(ii)$ it is impossible to obtain a non-trivial loop by a composition of edges of $T_k$ and their inverses.  Note that every vertex of $\gamma$ is a vertex of $T_V$. The graph $T_V$ is called a {\em maximal tree} of $\gamma$.         

Fix a vertex $v_0$ of $\gamma$ and renumber the edges of the graph  in such a way that $\gamma\setminus T_V=\{e_1,\ldots,e_M\}$ (where $0\leq M<N$).

Because $T_V$ is a connected graph and it is impossible to compose a non-trivial loop from its edges and their inverses for every vertex $v$ of $\gamma$ which is distinct from $v_0$ there exists a unique path along $T_V$ which connects $v$ with $v_0$. More precisely, there exists a unique composition
\begin{equation}
e(v)=e'_{m}\circ \ldots\circ e'_{1}, 
\label{ev}
\end{equation}
of edges such that $(i)$ for every $i\in\{1,\ldots,m\}$ either $e'_i$ or its inverse belongs to the tree $T_V$, $(ii)$ if $i\neq j$ then $e'_i$ is distinct from $e'_j$ and its inverse and $(iii)$ the vertex $v$ is the source of $e'_1$ and the vertex $v_0$ is the target of $e'_m$. 

Let us fix a one-form $A\in Q$ and assign to each vertex $v$ of $\gamma$ a number $\theta(v)$ according to the following prescription: the number assigned to $v_0$ is $0$. If $v\neq v_0$ then
\[
\theta(v)= \kappa_{e'_1}(A)+\theta(v'),
\]
where $e'_1$ is the edge appearing in the composition \eqref{ev} and $v'$ is the target of $e'_1$. If $f$ is a function on $\Sigma$ such that for every vertex $v$ of $\gamma$ $f(v)=\theta(v)$ then for every edge $e_I\in T_V$
\begin{equation}
\kappa_{e_I}(A+df)=0.
\label{kappa=0}
\end{equation}
Setting this to  Equation \eqref{pK-kk} we obtain 
\begin{multline}
(\bar{\psi}\circ\tl{K}^{-1}_\gamma)\big(\kappa_{e_1}(A),\ldots,\kappa_{e_N}(A)\big)=\\=(\bar{\psi}\circ\tl{K}^{-1}_\gamma)\big(\kappa_{e_1}(A+df),\ldots,\kappa_{e_M}(A+df),0,\ldots,0\big).
\label{pK-kk-00}
\end{multline}

It follows from the equation above that if $T_V=\gamma$ then $M=0$ and every gauge invariant function compatible with $K_\gamma$ must be constant.

Suppose now that $T_V$ is a proper subset of $\gamma$ and fix an edge $e_J\in\gamma\setminus T_V$.  If neither the source $v$ nor the target $v'$ of $e_J$ coincide with $v_0$ then
\[
l_J:=e(v')\circ e\circ \big(e(v)\big)^{-1}
\]
is a non-trivial loop based at $v_0$ (if $v=v_0$ or $v'=v_0$ then we define respectively, $l_J:=e(v)\circ e$ and $l_J:=e\circ \big(e(v)\big)^{-1}$). Then
\[
\kappa_{l_J}(A)=\kappa_{l_J}(A+df)=\kappa_{e_J}(A+df),
\]
where in the second step we used \eqref{kappa=0}. Setting this to \eqref{pK-kk-00} yields an equality 
\[
(\bar{\psi}\circ\tl{K}^{-1}_\gamma)\big(\kappa_{e_1}(A),\ldots,\kappa_{e_N}(A)\big)=(\bar{\psi}\circ\tl{K}^{-1}_\gamma)\big(\kappa_{l_1}(A),\ldots,\kappa_{l_M}(A),0,\ldots,0\big).
%\label{pK-kk-l0}
\]

Note that neither the loops $\{l_1,\ldots,l_M\}$ nor the ordering of edges of $\gamma$ depend on the one-form $A$. Therefore the result just obtained holds for every $A\in Q$. This result  coincide with \eqref{pK-kk-l} modulo the ordering of variables at the r.h.s. of both equations. 

If the graph $\gamma$ is not connected then we divide it into maximal connected subgraphs and repeat the procedure above for each resulting subgraph separately and obtain finally \eqref{pK-kk-l} modulo the ordering of variables at the r.h.s.. 

It is clear that the set of loops $\{l_1,\ldots,l_M\}$ obtained from a graph according to the procedure $(i)$ is a set of independent loops and $(ii)$ is not unique since it depends on the choice of a maximal tree for every maximal connected subgraph of the graph.

%***************************************************
%***************************************************
%\bibliography{bibliography}{}
%\bibliographystyle{oko}
%***************************************************
%***************************************************

\end{document}